\documentclass[11pt,fleqn,a4paper]{article}
\usepackage{graphicx}
\usepackage{fullpage}
\usepackage{amssymb}
\usepackage{amsthm}
\usepackage{epstopdf}
\usepackage[flushleft]{threeparttable}
\usepackage{pdflscape}
\usepackage{amsmath}
\usepackage{tkz-graph}
\usetikzlibrary{shapes.geometric}
\usepackage{subfigure}
\usepackage{setspace}
\usepackage[bottom]{footmisc}
\usepackage{endnotes}
\usepackage{graphicx}
\usepackage{textcomp}
\usepackage{adjustbox} 
\usepackage{natbib}
\usepackage[ansinew]{inputenc}
\usepackage{verbatim}
\usepackage{lscape}
\usepackage{array}
\usepackage{fltpoint}
\usepackage{dcolumn}
\usepackage{booktabs}
\usepackage{rotating}
\usepackage[norounding,point]{rccol}
\usepackage{hyperref}
\usepackage{placeins}
\usepackage{multirow}
\usepackage{paralist}
\usepackage{enumitem}
\usepackage{subcaption}

\newtheorem{assumptionA}{Assumption}
\newtheorem{assumptionB}{Assumption}

\usepackage{wasysym}
\usepackage{mathtools}
\usepackage{marvosym}
\usepackage{amsthm}

\usepackage[margin=0.8in]{geometry}

\usepackage{tikz}

\usetikzlibrary{shapes,decorations,arrows,calc,arrows.meta,fit,positioning}
\tikzset{
    -Latex,auto,node distance =1 cm and 1 cm,semithick,
    state/.style ={ellipse, draw, minimum width = 0.7 cm},
    point/.style = {circle, draw, inner sep=0.04cm,fill,node contents={}},
    bidirected/.style={Latex-Latex,dashed},
    el/.style = {inner sep=2pt, align=left, sloped}
}

\numberwithin{equation}{section}

\hypersetup{
colorlinks=true,
urlcolor=blue,
linkcolor=red,
citecolor=blue
}

\usepackage{setspace}%%
\usepackage{xspace}%% für besseres Handling der spaces

\newtheorem{theorem}{Theorem}

\newtheorem{assumption}{Assumption}

\tikzstyle{VertexStyle} = [shape            = ellipse,
minimum width    = 6ex,%
draw]

\tikzstyle{EdgeStyle}   = [->,>=stealth']

\begin{document}
\sloppy

\begin{center}

{\LARGE Testing Full Mediation of Treatment Effects\\and the Identifiability of Causal Mechanisms}

{\large \vspace{0.1cm}}

{\large Martin Huber*, Kevin Kloiber+, Luk\'{a}\v{s} Laff\'{e}rs**} \vspace{0.1cm}

{\footnotesize{*Department of Economics, University of Fribourg, Fribourg, Switzerland \\ +Department of Economics, University of Munich, Munich, Germany\\  **Department of Mathematics, Matej Bel University, Bansk\'a Bystrica, Slovakia\\ **Department of Economics, Norwegian School of Economics, Bergen, Norway}}
\end{center}

\thispagestyle{empty}
%\vspace*{0.5 cm}
\begin{abstract}
  {\footnotesize In causal analysis, understanding the causal mechanisms through which an intervention or treatment affects an outcome is often of central interest. We propose a test to jointly evaluate (i) whether the causal effect of a treatment that is randomly assigned conditional on covariates is fully mediated by, or operates exclusively through, observed intermediate outcomes (referred to as mediators or surrogate outcomes), and (ii) whether the various causal mechanisms operating through different mediators are identifiable conditional on covariates. We demonstrate that if full mediation and identification of causal mechanisms hold jointly, then the conditionally random treatment is conditionally independent of the outcome given the mediators and covariates. Furthermore, we extend our framework to settings with non-randomly assigned treatments. We show that, in this case, full mediation remains testable, while identification of causal mechanisms is no longer guaranteed. We propose a double machine learning framework for implementing the test that can incorporate high-dimensional covariates and is root-n consistent and asymptotically normal under specific regularity conditions. We also present a simulation study demonstrating good finite-sample performance of our method, along with two empirical applications revisiting randomized experiments on maternal mental health and social norms.}
  \\[0.5cm]\it{JEL Classification:} C12, C21 \\[0.1cm]
{\it Keywords: mediation, causal mechanisms, double machine learning, conditional independence, treatment effects, hypothesis test.}
\end{abstract}

\begin{spacing}{1} 
{\scriptsize  We have benefited from comments by Jonathan Roth. %of seminar participants , as well as conference participants at . 
Addresses for correspondence: Martin Huber, University of Fribourg, Bd.\ de P\'{e}rolles 90, 1700 Fribourg, Switzerland; martin.huber@unifr.ch. Kevin Kloiber, University of Munich, Ludwigstrasse 33, 80539 Munich, Germany, and ifo Institute, Poschingerstrasse 5, 81679 Munich, Germany; kevin.kloiber@econ.lmu.de. Luk\'{a}\v{s} Laff\'{e}rs, Department of Mathematics, Matej Bel University, Tajovsk\'{e}ho 40, 974 01 Bansk\'{a} Bystrica, Slovakia; Department of Economics, Norwegian School of Economics, Helleveien 30, 5045 Bergen, Norway; lukas.laffers@gmail.com. Laff\'{e}rs acknowledges support  provided by the Slovak Research and Development Agency under contracts VEGA 1/0645/26 and APVV-21-0360.}
\end{spacing}
%We introduce a method to test the identification of causal effects in causal mediation and dynamic treatment models, accounting for observed covariates and suspected instruments. Our approach builds upon \citet{huberkueck2022}'s test designed for single treatment models. The models under study involve sequential assignment of treatment and mediator to assess direct treatment effects (net of the mediator), indirect effects via the mediator, or joint effects of both treatment and mediator.
%
%We establish testable conditions sufficient for identifying these effects in observational data. These conditions jointly imply: (1) the exogeneity of treatment and mediator given covariates, and (2) the validity of at least two distinct instruments for treatment and mediator. These instruments must not directly affect the outcome, other than through the treatment or mediator, and should be unconfounded given covariates.
%
%Our method extends to scenarios involving sample selection or post-treatment attrition issues. In such cases, the mediator is replaced by a selection indicator for observed outcomes, enabling joint testing of treatment selectivity and attrition. We propose machine learning-based tests that incorporate covariates in a data-driven manner and evaluate their finite sample performance through simulations.
%
%Additionally, we apply our testing method to labor market data from Slovakia. Our findings suggest that the testable implications crucial for identification remain unchallenged for certain sequentially assigned training programs.

\thispagestyle{empty}
\newpage
\setcounter{page}{1}

\section{Introduction} 

In many causal studies, interest extends beyond estimating the total effect of an intervention or treatment (such as education) on an outcome (such as health) to understanding the causal mechanisms through which the effect operates, such as hypothesized peer effects in schools or potential education-induced changes in health behavior. Causal mediation analysis aims to disentangle the indirect effect of a treatment, mediated by one or more intermediate variables (mediators), from the direct effect of the treatment on the outcome that does not operate through these mediators. In particular, researchers may be interested in determining whether a direct effect exists at all after accounting for observed mediators. The absence of a direct effect implies that the treatment effect is fully mediated.

A common approach in empirical studies to test for the absence of a direct effect is to estimate an outcome regression on the treatment and assess whether the treatment coefficient becomes approximately zero (and statistically insignificant) once the mediators are included as explanatory variables. However, this test is generally invalid when mediators are endogenous, that is, when unobserved confounders jointly affect both the mediator and the outcome. In such cases, controlling for mediators can introduce spurious associations between the treatment and the outcome due to collider bias or post-treatment selection bias, as for instance discussed in \citet{RoGr92}. This problem, which motivates an alternative testing approach proposed by \citet{kwon2024testing} and outlined further below, arises even when the treatment is randomly assigned and leads to biased estimation of the direct effect. As also noted by \citet{acharya2016explaining}, naively controlling for endogenous mediators does not permit valid inference about the presence or absence of direct effects unless both the treatment and the mediators are exogenous, at least conditional on observed covariates.

While testing whether the treatment effect is zero after including mediators cannot be used to establish full mediation based on the treatment effect alone, this paper demonstrates that it can be used to jointly test for full mediation and specific exogeneity conditions. More precisely, we consider a causal model with a treatment that is randomly assigned, at least conditional on observed covariates. We show that tests of the conditional independence between treatment and outcome, conditional on the mediators and possibly additional covariates to control for mediator-outcome confounding, permit the joint testing of two conditions: (i) the absence of a direct treatment effect on the outcome (i.e., full mediation), and (ii) the identifiability of the indirect effects operating through the mediators. 

It is important to emphasize that our test is informative only about factual outcomes - those observed under the mediator values actually realized in the data. For these factual outcomes, satisfaction of the conditional independence between treatment and outcome is necessary and sufficient for the joint satisfaction of full mediation and identifiability. This means that while we can detect violations of our assumptions in observed data, our test cannot detect violations that might occur exclusively in counterfactual outcomes, such as the outcome of a mediated unit had it not been mediated. While this represents a theoretical boundary, 'counterfactual-only' violations appear unlikely in most empirical settings, as they would require highly restrictive and specific data-generating mechanisms to avoid manifesting in any of the factual outcomes as well.

We also extend the framework to non-randomly assigned treatments. We demonstrate that, in this case, our conditional independence test still permits verification of full mediation. However, it no longer permits testing the identifiability of causal mechanisms. The reason is that while our test is informative about mediator-outcome confounding, it is not informative about treatment-mediator confounding (conditional on covariates). The latter must be ruled out by assumption (and can be ruled out by design under treatment randomization) for the identification of causal mechanisms.

We also link our results to two identification strategies in observational studies for assessing causal effects: (i) the back-door (BD), or selection-on-observables, strategy, see for instance the survey by \citet{Im04}, which assumes the absence of treatment-outcome confounders conditional on covariates; and (ii) the front-door (FD) approach, see e.g. \citet{Pearl00} and \citet{Fulcheretal2019}, which assumes full mediation as well as the absence of treatment-mediator and mediator-outcome confounders conditional on covariates. We show that satisfaction of our testable condition - conditional independence between the treatment and the outcome given the mediators and covariates - implies that BD and FD approaches targeting features of the potential outcome distributions conditional on covariates, such as the conditional average treatment effect (CATE), yield identical parameters. 

We also provide a sufficient condition under which the converse implication holds, namely a separability assumption. The latter assumes that the conditional probability of the outcome given treatment, mediator, and covariates can be modeled by two additive functions: one that depends on the treatment and covariates, and another that depends on the mediator and covariates. This strong parametric condition, however, is unlikely to be satisfied in many empirically relevant settings, as it rules out treatment-mediator interaction effects on the outcome conditional on covariates and thus severely restricts effect heterogeneity \citep{RoGr92,Robins2003}. Therefore, considering our testable implication appears more attractive than running a BD-FD comparison, which, in the absence of separability, may yield equivalence even if full mediation (or other identifying assumptions) fails.

For the practical implementation of the test of conditional independence of the treatment and outcome, we rely on double machine learning (DML) methods as proposed in \citet{Chetal2018}. These methods employ doubly robust (DR) score functions for causal parameter estimation and enable flexible adjustment for high-dimensional covariates using machine learning, while maintaining root-\(n\) consistency under suitable regularity conditions. A key property of DR scores is \citet{Neyman1959} orthogonality, which makes DR estimation relatively insensitive to biases in the estimation of nuisance components such as outcome, treatment, or mediator models, such as regularization bias common in machine learning methods. 

More specifically, we build on \citet{Apfel2023learning}, who consider a closely related testing problem of jointly verifying instrument validity and treatment exogeneity given observed covariates, and propose a DR score for a joint conditional independence test across all values of the (possibly high-dimensional) conditioning set. In the same vein (and even if practically less attractive due to the required separability condition), we also propose a DML test comparing BD and FD methods, where the test statistic is based on the mean difference of the DR score functions for conditional mean potential outcomes under each approach. See \cite{Robins+94} and \cite{RoRo95} for BD-based DR score functions under the selection-on-observables assumption, and \cite{Fulcheretal2019} for analogous DR scores based on the FD criterion.

We investigate the finite-sample performance of our conditional independence test in a simulation study and find it to perform well for sample sizes of a few thousand observations. As an empirical illustration, we apply our method to two randomized experiments previously considered by \citet{kwon2024testing} (for testing full mediation alone): a cognitive-behavioral therapy intervention for maternal depression in Pakistan \citep{baranov2020} and an information intervention on social norms regarding women's employment in Saudi Arabia \citep{bursztyn2020}. In both cases, we reject the joint null of full mediation and mediator exogeneity.

Our paper is related to a large literature on causal mediation analysis that assesses direct and indirect effects based on selection-on-observables assumptions imposing the conditional exogeneity of the treatment and the mediator given observed covariates, see for instance \citet{FlFl09}, \citet{ImKeYa10}, \citet{Hong10}, \citet{TchetgenTchetgenShpitser2011}, and \citet{Huber2012}, among many others. It is specifically related to studies aiming at testing the absence of a direct effect or full mediation when (implicitly or explicitly) imposing such selection-on-observables assumptions. In contrast to this literature, we do not impose selection-on-observables assumptions a priori, but propose a method that jointly tests this assumption together with full mediation under specific conditions. 

Our work is also closely related to \citet{deLunaJohansson2012} and \citet{huberkueck2022}, who develop tests for the conditional independence of an instrumental variable given treatment and covariates. Under certain conditions, this implies that the selection-on-observables assumption holds for the treatment and the instrument, and that the instrument does not directly affect the outcome. In our setting, the treatment plays the role of the instrument, and the mediators take the role of the treatment in the context of their tests. As noted above, we make use of a DR approach proposed in \citet{Apfel2023learning}, which yields an asymptotically normal test statistic under suitable regularity conditions. 

Our paper is also related to \citet{kwon2024testing}, who develop a test for full mediation that is robust to mediator endogeneity and therefore does not imply simultaneously verifying the identifiability of indirect effects based on the selection-on-observables assumption. Their method is based on a partial identification approach akin to the literature on testing instrumental variable (IV) assumptions; see, for example, \citet{Kitagawa2008}, \citet{HuMe11}, \citet{MoWa2014}, \citet{farbmacher2022instrument}, and \citet{sun2023instrument}.  \citet{kwon2024testing} show that the testable constraints on the joint distribution of the outcome and mediator conditional on treatment, arising from the multiple mediator values considered in their paper, are sharp, in the sense that they exhaust all testable information in the data: if these constraints are satisfied, there exists a data generating process that satisfies full mediation and is consistent with the observable data.

Relative to \citet{kwon2024testing}, one may ask why it is valuable to test the joint null of full mediation and mediator exogeneity, given that these correspond to distinct types of failure -- one concerning the causal mechanism and the other statistical identification. Our joint approach is motivated by the fact that, in many empirical applications of causal mediation analysis, the objective is not merely to demonstrate that a mechanism may exist, but to point-identify its magnitude. Because point identification of all mechanisms through which a treatment affects the outcome, based solely on observed mediators, is not possible if either full mediation fails or the mediator is endogenous, rejection of our joint null provides a critical diagnostic signal. While \citet{kwon2024testing} focus on whether there exists any distribution of potential outcomes consistent with full mediation, even in the presence of endogeneity, our test is additionally useful for assessing whether fully mediated treatment effects are point-identified.

Beyond this conceptual difference, our framework also differs from \citet{kwon2024testing} along several methodological dimensions. First, we
do not impose monotonicity of the mediator in the treatment, which broadens the range of empirical applications in which the test can be applied.
Second, our procedure accommodates high-dimensional covariates \( X \) and is compatible with flexible machine-learning estimators of the relevant nuisance functions. Third, we allow the mediator to be multivariate and continuous, rather than restricting attention to discrete mediators. Finally, we provide results for settings in which the treatment \( D \) is not randomly assigned, so that the test remains informative beyond randomized controlled trials.

Our study is also related to the literature on causal analysis based on so-called surrogate outcomes, as discussed in \cite*{frangakis2002principal} and \citet{athey2020estimating}. This framework is particularly useful when the primary outcome of interest is not observed in the dataset in which the treatment is measured. Instead, researchers may observe one or more surrogate outcomes, through which the treatment effect on the primary outcome is assumed to operate entirely - that is, under the assumption of full mediation via the surrogate outcomes, which then function as mediators. \citet{prentice1989surrogate} notes that, under full mediation, the treatment and the (unobserved) primary outcome are conditionally independent given the surrogate outcome.  However, as we formalize in our paper using the potential outcomes framework, this also requires that the surrogate outcome is conditionally exogenous and that the treatment is randomly assigned (at least conditional on covariates).

The remainder of this paper is organized as follows. Section \ref{Assumptions1} introduces the causal effects of interest and the identifying assumptions under a randomized treatment (at least conditional on covariates), which allow for testing full mediation and the identifiability of the indirect effects. It also outlines the testable conditional independence condition. Section \ref{Assumptions2} discusses the identifying assumptions when the treatment is not (conditionally) randomized, demonstrating that the conditional independence condition can then only be used to test for full mediation, but not for the identifiability of the indirect effects. Section \ref{Assumptions3} relates our testable implication to the testing power implied by comparing parameters obtained from the BD and FD criteria. Section \ref{testing} describes the empirical implementation of our tests based on DML using DR score functions. Section \ref{simulations} presents a simulation study evaluating the finite-sample performance of the proposed tests. Section \ref{application} applies our methodology to the experimental studies of \citet{baranov2020} and \citet{bursztyn2020}. Section \ref{conclusion} concludes.

\section{Causal framework for (conditionally) random treatments}\label{Assumptions1}

Causal mediation analysis seeks to decompose the overall causal effect of a treatment variable, denoted by \( D \), on an outcome variable, \( Y \), into distinct causal mechanisms. Specifically, the effect of \( D \) on \( Y \) may be mediated through one or more observed mediators, denoted by \( M \), while a direct effect of \( D \) on \( Y \) may also exist through other (potentially unobserved) pathways.  Throughout this discussion, we denote random variables in capital letters and their realized values in lowercase letters. To formalize causal mechanisms, we adopt the potential outcomes framework, as, for example, considered in \citet{Neyman23} and \citet{Rubin74}. Let \( M(d) \) represent the potential mediator under treatment level \( d \in \mathcal{D} \), where \( \mathcal{D} \) is the support of \( D \). Similarly, let \( Y(d,m) \) denote the potential outcome under treatment \( d \) and mediator value \( m \in \mathcal{M} \), where \( \mathcal{M} \) is the support of \( M \).  For each individual, we define \( Y(d,m) \) and \( M(d) \) as functions of their assigned treatment \( D = d \) and mediator value \( M = m \), assuming:  
(i) an individual's potential outcomes are unaffected by the treatment or mediator status of others, and  
(ii) there are no multiple versions of any treatment or mediator level across individuals.  These conditions constitute the Stable Unit Treatment Value Assumption (SUTVA), see e.g.\ \citet{Cox58} and \citet{Rubin80}. Additionally, we denote observed pre-treatment covariates as \( X \), with support \( \mathcal{X} \). We denote the support of $Y$ by \( \mathcal{Y} \). Throughout the paper, $A \bot B \mid C$ denotes statistical independence of $A$ and $B$ conditional on $C$; this is a property of the joint distribution of the variables involved and does not, by itself, encode a causal relationship. Under the causal faithfulness imposed in Assumption~\ref{A1} below, however, statistical independence is equivalent to d-separation in the corresponding causal graph.

We start by defining causal effects that are of interest in the context of treatment evaluation and mediation analysis. For simplicity, we focus on a binary treatment and on average effects (which are the primary focus in applied work) in the subsequent discussion. However, the assumptions and testable implications we present further below more generally refer to the distribution of potential outcomes and also apply to non-binary treatments. A first parameter of interest is the average treatment effect (ATE), denoted by \( \Delta \), which comprises both direct and indirect effects of the treatment:  
\begin{align}
    \Delta = E[Y(1,M(1)) - Y(0,M(0))].
\end{align}  
Causal mediation analysis aims to disentangle the causal mechanisms underlying the ATE; see, e.g., \citet{Robins2003} and \citet{Pearl01}.  
By varying the treatment while holding the mediator fixed at some value \( M = m \), we obtain the average controlled direct effect (CDE), denoted by \( \theta(m) \):  
\begin{align}
\theta(m) = E[Y(1,m) - Y(0,m)], \quad m \in \mathcal{M}.
\end{align}  
By varying the treatment while holding the mediator fixed at its potential value under treatment \( D = d \), we obtain the average natural direct effect (NDE), denoted by \( \theta(M(d)) \):  
\begin{align}
\theta(M(d)) = E[Y(1,M(d)) - Y(0,M(d))], \quad d \in \{0,1\}.
\end{align}  
Conversely, by varying the mediator between its potential values under treatment and nontreatment while keeping the treatment fixed at \( D = d \), we obtain the average natural indirect effect (NIE):  
\begin{align}
\delta(d) = E[Y(d,M(1)) - Y(d,M(0))], \quad d \in \{0,1\}.
\end{align}  
Notably, the ATE can be decomposed into the sum of the NDE and NIE evaluated at opposite treatment states:  
\begin{align}\label{ate}
\Delta = \theta(1) + \delta(0) = \theta(0) + \delta(1).
\end{align}  
Furthermore, if the ATE of $D$ on $Y$ is fully mediated by $M$ such that the mean potential outcome depends only on the mediator, i.e., \( E[Y(d,m)] = E[Y(m)] \), then any direct effect like the CDE or NDE is zero, implying \( \theta(m) = \theta(M(d)) = 0 \). Under this condition, equation \eqref{ate} simplifies to:  
\begin{align}\label{ate_indirect}
\Delta = \delta(0) = \delta(1) = \delta = E[Y(M(1)) - Y(M(0))].
\end{align}  
Thus, the ATE corresponds entirely to the indirect effect.   

Assuming that treatment is randomly assigned, at least when controlling for covariates $X$, we propose a joint test for (i) the absence of a direct effect and (ii) the exogeneity of the mediator \( M \). If both assumptions hold, we can identify the effects of different mediator values,  
\begin{align}
E[Y(m) - Y(m')], \quad \text{for } m \neq m' \text{ and } m,m' \in \mathcal{M},
\end{align}  
in addition to the ATE (or NIE) related to a treatment-induced shift in the mediator provided in equation \eqref{ate_indirect}.  
If the test rejects asymptotically, then either $M$ is not exogenous, $D$ has a direct effect on $Y$ (i.e., $\theta(m)\neq 0$ for some $m \in \mathcal{M}$), or both.  While we have so far considered a binary treatment for expository purposes, our test also applies to nonbinary treatments. In this case, the ATE and the indirect effect, if it fully explains the ATE for two distinct treatment levels \( d, d' \) is given by:  
\begin{align}
\Delta_{d,d'} = E[Y(M(d)) - Y(M(d'))].
\end{align}  

After defining the causal effects, we subsequently discuss several identifying assumptions required for testing. Our first assumption establishes a specific causal structure between variables within our framework, positing that only certain variables exert a causal influence on others. We formalize this causal structure using the previously mentioned potential outcome notation, by applying it to variables beyond the outcome and the mediator. Let \(A(b)\) and \(A(b,c,\ldots)\) denote the potential value of variable \(A\) when variable \(B\) is set to \(b\), or when variables \(B\), \(C\), etc., are set to \(b\), \(c\), and so on. Furthermore, we assert that if a causal relationship exists between two variables (potentially conditional on other variables), then a statistical dependence must exist between them, consistent with the principle of causal faithfulness. We work within the SWIG template framework of \citet{richardson2013single} under the FFRCISTG semantics of \citet{Ro86} and \citet{RobinsRichardson2010}: all interventional graphs are generated from a common template, but no cross-world independence (in the NPSEM-IE sense) is imposed.

\begin{assumption}[Causal structure and faithfulness]\label{A1}  
\begin{align*}  
  M(y) = M, \quad D(m, y) = D, \quad X(d, m, y) = X, \\  
  \forall d \in \mathcal{D}, m \in \mathcal{M}, y \in \mathcal{Y},  
\end{align*}  
only variables that are d-separated in any causal model are statistically independent.  
\end{assumption}  
\noindent The first line of Assumption~\ref{A1} excludes any reverse causal effect of outcome \( Y \) on \( D \), \( X \), or \( M \). Additionally, the treatment \( D \) must not causally affect \( X \), and the mediator \( M \) must not causally affect \( X \) or \( D \), which holds if \( X \) represents pre-treatment covariates. Assumption~\ref{A1} also establishes causal faithfulness, which ensures that only variables that are d-separated, i.e., not linked via any causal paths, are statistically independent or conditionally independent (see, e.g., \citep{Pearl00}). 

To be more precise, we employ the d-separation criterion of \citet{pearl1988probabilistic}, which relies on blocking causal paths between variables. A path between two sets of variables, \( A \) and \( B \), is blocked when conditioning on a set of control variables, \( C \), if:  
\begin{enumerate}  
	\item the path between \( A \) and \( B \) is a causal chain, implying that \( A\rightarrow M \rightarrow B \) or \( A\leftarrow M \leftarrow B \), or a confounding association, implying that \( A\leftarrow M \rightarrow B \), and variable (set) \( M \) is among control variables \( C \) (i.e., controlled for),  
	\item the path between \( A \) and \( B \) contains a collider, implying that \( A\rightarrow  S  \leftarrow B \), and variable (set) \( S \) or any variable (set) causally affected by \( S \) is not among control variables \( C \) (i.e., not controlled for).  
\end{enumerate}  
According to the d-separation criterion, \( A \) and \( B \) are d-separated when conditioning on a set of control variables \( C \) if, and only if, \( C \) blocks every path between \( A \) and \( B \). D-separation is sufficient for the (conditional) independence of two variables, providing a crucial component for the proof of our theorems. Causal faithfulness further imposes that d-separation is also a necessary condition, such that two variables are statistically independent if and only if d-separation holds.  A scenario in which faithfulness fails occurs when one variable affects another via multiple causal paths (or mechanisms) that exactly cancel out, such that the variables appear independent despite an underlying causal relationship (see, e.g., the discussion in \citet{spirtes2000causation}).

Next, we introduce a common support assumption that is required for nonparametric testing. To this end, let \( f(A=a \mid B=b) \) denote the conditional density of variable \( A \) given \( B \) at values \( A=a \) and \( B=b \). If \( A \) is discrete rather than continuous, then \( f(A=a \mid B=b) \) represents a conditional probability rather than a density.  
\begin{assumption}[Common support]\label{A2}  
  \begin{align*}  
  f(D=d,  M=m \mid X=x) > 0 \quad \forall d \in \mathcal{D}, m \in \mathcal{M}, \textrm{ and } x \in \mathcal{X}.  
  \end{align*}  
\end{assumption}  
\noindent Assumption~\ref{A2} requires that, conditional on \( M \) and \( X \), observations exist for all values of \( D \) that occur in the total population. If this condition is violated, total mediation and the identifiability of effects of \( m \) can only be tested over a subset of the support of \( X \), \( D \), and/or \( M \).  
Assumption~\ref{A2} further implies both  $f(D=d \mid M=m, X=x) > 0$ and $f(M=m \mid D=d, X=x) > 0$. Satisfaction of \( f(D=d \mid M=m, X=x) > 0 \) ensures that, for any values of mediators and covariates in the population, we have common support for testing whether the effects of \( D \) on \( Y \) are fully mediated by \( M \). Similarly, \( f(M=m \mid D=d, X=x) > 0 \) implies that, for any values of treatment and covariates, common support exists for testing whether the effect of \( M \) on \( Y \) is unconfounded.  
A violation of common support entails that one may jointly test full mediation and mediator exogeneity only for a subset of covariate, treatment, or mediator value combinations. Thus, Assumption~\ref{A2} is not strictly necessary for implementing our testing approach if we restrict attention to a subpopulation satisfying common support. However, this restriction to a subpopulation may reduce testing power.  

Our next assumption imposes that the treatment has a first-stage effect on the mediator:  
\begin{assumption}[Treatment effect on mediator]\label{A3}  
  \begin{align*}  
  \Pr(M(d) \neq M(d') \mid X = x) > 0 \quad \text{for some } d, d' \in \mathcal{D} \text{ and } \forall x \in \mathcal{X}.  
  \end{align*}  
\end{assumption}  
\noindent This assumption ensures that treatment induces variation in the mediator for at least some individuals. If it fails, the indirect effect is necessarily zero, as \( M \) does not respond to changes in \( D \). If treatment does not influence the mediator, mediation analysis becomes meaningless.  

Next, we assume that the treatment is as good as randomly assigned, as in an experiment, at least within strata defined by covariates \( X \) (or subsets thereof). This implies the following conditional independence assumption:
\begin{assumption}[Conditional independence of the treatment]\label{A4}  
  \begin{align*}  
   \{Y(d,m), M(d)\}  \bot D \mid X = x \quad \forall d \in \mathcal{D}, m \in \mathcal{M}, \text{ and } x \in \mathcal{X}.  
  \end{align*}  
\end{assumption}  
\noindent This assumption states that, conditional on \( X \), there are no unobserved confounders jointly affecting the treatment \( D \) and any
post-treatment variable, namely the mediator \( M \) or the outcome \( Y \). This is known as the exogeneity, unconfoundedness, exchangeability, ignorability, or selection-on-observables assumption. See, for instance, \citet{Im04} for a survey. We state Assumption~\ref{A4} in the joint form \(\{Y(d,m), M(d)\} \perp D \mid X = x\) for all \(d \in \mathcal{D}\), \(m \in \mathcal{M}\), and \(x \in \mathcal{X}\) because this is the form that is naturally delivered by design. Whenever \( D \) is (conditionally) randomized, as in a randomized controlled trial or a setting in which treatment is as good as random given \( X \), the full vector of counterfactuals \(\{Y(d,m), M(d)\}\) is independent of \( D \) given \( X \). In the terminology of \citet{Imbens99}, this corresponds to \emph{strong} unconfoundedness. The weaker variant requires only marginal independence between \( D \) and each potential outcome separately, and is what is actually needed for identification of average effects. Subsequent results in the paper exploit only this weaker, single-world content. Assumption~\ref{A4} is trivially satisfied in experiments in which the treatment is randomized unconditionally, such that controlling for \( X \) is not required for identification. Under Assumption~\ref{A4}, Assumption~\ref{A3} becomes testable by examining whether \( M \) varies across values of \( D \), conditional on \( X \).

Maintaining Assumptions \ref{A1} to \ref{A4}, we aim to test the following two assumptions. The first imposes full mediation, implying that the average treatment effect (ATE) corresponds to the indirect effect. The second imposes selection on observables with respect to the mediator.  
\begin{assumption}[Conditional full mediation]\label{Afullmed}  
  \begin{align*}  
   \Pr(Y(d,m) =Y(m) \mid X=x) = 1 \quad \forall d \in \mathcal{D}, m \in \mathcal{M}, \text{ and } x  \in \mathcal{X}.  
  \end{align*}  
\end{assumption}  
\noindent Assumption \ref{Afullmed} states that any effect of treatment \( D \) on \( Y \) is fully mediated by \( M \), conditional on \( X \). If \( X \) consists only of pre-treatment covariates, this implies the stronger version \( \Pr(Y(d,m) =Y(m)) = 1 \), since \( D \) and \( M \) cannot affect \( Y \) through variables that causally precede them. If \( X \) includes post-treatment variables, this assumption allows \( D \) to affect \( Y \) via \( X \), but stipulates that the only other causal mechanism through which \( D \) influences \( Y \) is the mediator \( M \). In this case, testing focuses exclusively on the indirect effect of \( D \) through \( M \), and does not consider other potential causal mechanisms via \( X \) that may exist in addition to those through \( M \).

\begin{assumption}[Conditional independence of the mediator]\label{Acimed}  
  \begin{align*}  
   Y(m)  \bot M \mid X = x \quad \forall m \in \mathcal{M}, \text{ and } x \in \mathcal{X}.  
  \end{align*}  
\end{assumption}  
\noindent Assumption \ref{Acimed} (Mediator exogeneity) implies that, given the covariates, the mediator is as good as randomly assigned, ruling out any unobserved confounders that jointly affect \( M \) and \( Y \) conditional on \( X \). This represents a selection-on-observables assumption for mediator assignment.

\begin{figure}[!htp]
	\centering
	\caption{Causal model satisfying Assumptions \ref{A1} to \ref{Acimed} }
	\label{dag_1}
	%\begin{subfigure}
		\centering
		\begin{tikzpicture}
			% DAG 1
			\node[state] (D) at (0,0) {$D$};
			\node[state] (M) [right =of D] {$M$};
			\node[state] (Y) [right =of M] {$Y$};
			\node[state] (X) [above =of M] {$X$};
			\node[state] (U1) [below =of D] {$U1$};
		  \node[state] (U2) [below =of M] {$U2$};
			\node[state] (U3) [below =of Y] {$U3$};
			% Directed edge
			\path (D) edge (M);
			\path (M) edge (Y);
			\path (X) edge (D);
			\path (X) edge (M);
			\path (X) edge (Y);
			\path[dashed] (U1) edge (D);
			\path[dashed] (U2) edge (M);
			\path[dashed] (U3) edge (Y);
		\end{tikzpicture}
\end{figure}
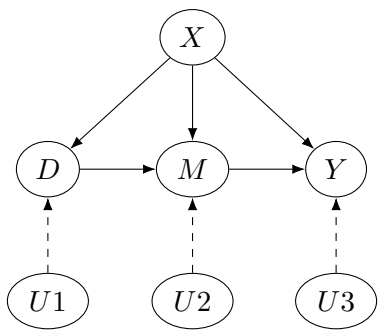

Figure \ref{dag_1} provides a causal model represented by a causal graph, where nodes denote variables (or sets of variables) and arrows indicate causal relationships between them, as discussed in \citet{Pearl00}. In addition to the previously defined variables \( Y \), \( D \), \( M \), and \( X \), the variables \( U1 \) to \( U3 \) represent unobserved factors, as shown by the dashed arrows, indicating that their effects are not directly observed. This model satisfies Assumptions \ref{A1} and Assumptions \ref{A3} to \ref{Acimed}. For example, there is no reverse causality from \( Y \) to \( M \) or \( D \), as stipulated in Assumption \ref{A1}. Furthermore, \( D \) influences \( M \) as per Assumption \ref{A3}, and it does not directly affect \( Y \) given \( X \), as required by Assumption \ref{Afullmed}. Additionally, \( D \) is as good as random conditional on \( X \), as outlined in Assumption \ref{A4}, with no unobservables jointly affecting \( D \) and \( Y \), or \( D \) and \( M \). Finally, \( M \) is as good as random conditional on \( X \), as specified in Assumption \ref{Acimed}, meaning there are no unobserved confounders jointly affecting \( M \) and \( Y \). As shown in Figure \ref{dag_1}, \( D \) is conditionally independent of \( Y \) given both \( M \) and \( X \), which is the testable implication of the joint satisfaction of these assumptions.

To discuss our testing approach more formally, consider the following conditional independence condition, which is testable in the data:
\begin{equation*}
Y \bot  D | M=m, X=x, \ \ \ \ \forall m \in \mathcal{M},\textrm{ and }  x  \in \mathcal{X}. \label{TI} \tag{TI}
\end{equation*}
Theorem \ref{theorem1} below shows that, conditional on Assumptions~\ref{A1}, \ref{A3}, and \ref{A4}, the testable implication \eqref{TI} is a necessary and sufficient condition for the joint satisfaction of Assumptions~\ref{Afullmed} and \ref{Acimed}.
% when considering potential outcomes \(Y(m)\) that correspond to the factual mediator realization \(M = m\).  This result implies that, in practice, Assumptions~\ref{Afullmed} and \ref{Acimed} can only be tested for factual (observed) outcomes, such as the potential outcomes \(Y(1)\) for units with \(M = 1\) and \(Y(0)\) for units with \(M = 0\). In contrast, it is not possible to construct tests based on counterfactual outcomes, such as \(Y(0)\) for units with \(M = 1\) or \(Y(1)\) for units with \(M = 0\). Consequently, our approach tests a necessary, but not sufficient, condition with respect to the joint distribution of potential outcomes.  If violations of Assumptions \ref{Afullmed} and \ref{Acimed} occur exclusively in counterfactual outcomes, for instance \(Y(0)\) for mediated units and \(Y(1)\) for non-mediated units, such violations cannot be detected by our testing procedure. From a practical perspective, however, it appears unlikely that violations would arise solely among counterfactual outcomes but not among factual ones, as this would require highly restrictive data-generating mechanisms. The proof of Theorem \ref{theorem1} is provided in Appendix \ref{appendixtheorem1}.

\begin{theorem}\label{theorem1}
Under Assumptions \ref{A1}, \ref{A3}, \ref{A4}: Assumptions \ref{Afullmed}, \ref{Acimed} $\iff$ (\ref{TI}).
\end{theorem}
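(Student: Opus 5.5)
The plan is to prove the two directions separately. Sufficiency is a direct potential-outcomes calculation, while necessity goes through the causal graph and the faithfulness part of Assumption~\ref{A1}. Throughout, by SUTVA the observed variables satisfy $M=M(D)$ and $Y=Y(D,M)$.

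\emph{Sufficiency} ($\Rightarrow$). Assume Assumptions~\ref{Afullmed} and \ref{Acimed} hold. By Assumption~\ref{Afullmed}, $Y=Y(D,M)=Y(M)$ almost surely given $X$. Hence it suffices to show
\[
\Pr(Y\le y\mid D=d,M=m,X=x)=\Pr(Y(m)\le y\mid M=m,X=x),
\]
a quantity free of $d$. The event $\{D=d,M=m\}$ equals $\{D=d,M(d)=m\}$, so
\[
\Pr(Y\le y\mid D=d,M=m,X=x)=\Pr(Y(m)\le y\mid D=d,M(d)=m,X=x).
\]
The key step is to drop the conditioning on $D$. I would use Assumption~\ref{A4} in its joint form, $\{Y(d,m),M(d)\}\bot D\mid X$, which gives
\[
\Pr(Y(m)\le y\mid D=d,M(d)=m,X=x)=\Pr(Y(m)\le y\mid M(d)=m,X=x).
\]
Next I need to replace $M(d)$ by the observed $M$. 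Assumption~\ref{Acimed} says $Y(m)\bot M\mid X$, so $\Pr(Y(m)\le y\mid M=m,X=x)=\Pr(Y(m)\le y\mid X)$. The remaining task is to show that $\Pr(Y(m)\le y\mid M(d)=m,X=x)$ equals the same object.

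This is the main obstacle, because $M(d)$ and $M$ are different variables. I would handle it with the SWIG template of \citet{richardson2013single}. Under Assumption~\ref{A1}, Assumption~\ref{Acimed} corresponds to the absence of an unblocked back-door path from $M$ to $Y$ given $X$ in the template. That graphical statement also yields $Y(m)\bot M(d)\mid X$ in the SWIG $\mathcal G(d,m)$. Alternatively, I can combine $Y(m)\bot M\mid X$ with $M=M(d)$ on $\{D=d\}$ and Assumption~\ref{A4}, which gives $Y(m)\bot M(d)\mid D=d,X$. Applying Assumption~\ref{A4} again then removes the conditioning on $D=d$. Either route makes the conditional law equal to $\Pr(Y(m)\le y\mid X=x)$, which does not depend on $d$, and this establishes (TI). Assumption~\ref{A2} guarantees that the conditioning events have positive density.

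\emph{Necessity} ($\Leftarrow$). I argue by contraposition using faithfulness. Suppose Assumption~\ref{Afullmed} fails. Then the template contains a path $D\to Y$ not through $M$, or a path $D\to W\to Y$ with $W\notin\{M,X\}$. Conditioning on $(M,X)$ cannot block it, so faithfulness gives $Y\not\bot D\mid M,X$.

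Suppose instead that Assumption~\ref{Acimed} fails. By faithfulness there is then an unobserved $U$ with $M\leftarrow U\to Y$ that is not blocked by $X$. Assumption~\ref{A3} provides the edge $D\to M$, and Assumption~\ref{A4} rules out $U$ affecting $D$. So $D\to M\leftarrow U\to Y$ is a path whose collider $M$ is conditioned on and whose non-collider $U$ is not, and this path is open given $(M,X)$. Faithfulness then gives a violation of (TI).

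The delicate point in this direction is to check that $X$ cannot block this collider path. $X$ is pre-treatment, and by Assumption~\ref{A1} it is a non-descendant of $D$ and $M$, so it does not lie on $D\to M\leftarrow U\to Y$. Conditioning on $X$ therefore cannot close the path. With that check in place, (TI) implies both Assumptions~\ref{Afullmed} and \ref{Acimed}, which completes the equivalence.
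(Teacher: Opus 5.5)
\textbf{The $\Leftarrow$ direction.} This matches the paper's argument. An unmediated edge $D\to Y$ contradicts \eqref{TI}, and a mediator--outcome confounder together with $D\to M$ gives an open path given $\{M,X\}$. Two small corrections:
\begin{itemize}
\item Going from $Y(m)\not\perp M\mid X$ to a d-connecting path uses the contrapositive of the Markov property, not faithfulness.
\item That path need not be $M\leftarrow U\to Y$; for example, $M\leftarrow U\to X\leftarrow U'\to Y$ is open given $X$. Prepending $D\to M$ to whatever path d-connects $M$ and $Y$ in $G_{DM}$ given $X$ works the same way.
\end{itemize}

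\textbf{The $\Rightarrow$ direction.} Your primary route differs from the paper's. The paper argues entirely on the graph. It assumes $\neg$\eqref{TI}, uses only $Y(d,m)\perp D\mid X$ to say the path is closed in $G_{DM}$, and then checks case by case how conditioning on the collider $M$, or restoring the arrows out of $D$ and $M$, could open it. You instead compute $\Pr(Y\le y\mid D=d,M=m,X=x)=\Pr(Y(m)\le y\mid X=x)$ via three ingredients:
\begin{itemize}
\item consistency;
\item the joint form $\{Y(d,m),M(d)\}\perp D\mid X$;
\item a single graphical step: faithfulness in $\mathcal G(m)$ turns Assumption~\ref{Acimed} into d-separation of $M$ and $Y(m)$ given $X$, splitting $D$ only removes paths, and the SWIG Markov property then gives $Y(d,m)\perp M(d)\mid X$.
\end{itemize}
This route is valid. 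It gives an explicit identification formula and confines the use of faithfulness to one step. The cost is that, as written, you also use the absence of treatment--mediator confounding. The paper's argument needs only Assumption~\ref{A4a}, which is why it carries over verbatim to Theorem~\ref{theorem2}.

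\textbf{The ``alternative'' derivation must be dropped.} Your second way of obtaining $Y(m)\perp M(d)\mid X$ does not work. $Y(m)\perp M\mid X$ plus Assumption~\ref{A4} does not yield $Y(m)\perp M(d)\mid D=d,X$. The law of $(Y(m),M)$ given $X$ is a mixture over treatment arms of the laws of $(Y(m),M(d))$, and dependence within the components can cancel in the mixture.

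Here is a counterexample. Take $X$ trivial, $V$ and $D$ independent fair coins, $M(1)=V$, $M(0)=1-V$, and $Y(d,m)=V$.
\begin{itemize}
\item Assumptions~\ref{A3}, \ref{A4}, and \ref{Afullmed} hold.
\item $M=1\{D=V\}$ is independent of $V=Y(m)$, so Assumption~\ref{Acimed} holds as well.
\item Yet $Y(m)=M(1)$, and given $M=1$ one has $Y=D$, so \eqref{TI} fails.
\end{itemize}
This is exactly the unfaithful cancellation (an edge $V\to M$ with $M\perp V$) that Assumption~\ref{A1} rules out. So the faithfulness-based SWIG step is not one of two interchangeable options; it is the indispensable ingredient of this direction, and your proof stands only through it.

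Finally, Assumption~\ref{A2} is not a hypothesis of Theorem~\ref{theorem1}, so you should not invoke it.
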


The assumptions underlying Theorem \ref{theorem1} allow for testing independence across the entire conditional potential outcome distribution, as required when aiming to identify distributional treatment effects, such as effects at specific quantiles of the potential outcome distribution. Examples include the total treatment effects considered in \citet{Firpo03} and the causal mechanisms analyzed in \citet{hsu2023doublyrobustestimationdirect}. If the focus is on average treatment effects as the causal parameters of interest, the conditional independence assumptions \ref{A4} and \ref{Acimed} can be relaxed to conditional mean independence assumptions, while Assumption \ref{A3} is to be replaced by a conditional mean dependence assumption. In this case, an analogous theorem for testing in the context of average treatment and mediator effects can be established, in the same vein as Theorem 2 in \citet{huberkueck2022} (which we omit here for brevity). We also note that our practical implementation of the test based on DML, as outlined in Section \ref{testing}, focuses on mean parameters.

\section{Causal framework without (conditionally) random treatments}\label{Assumptions2}

In this section, we discuss the testable implications without imposing Assumption \ref{A4}, imposing that the treatment is as good as randomly assigned conditional on covariates. For this reason, we shift from experimental settings to the context of observational data, in which treatment assignment is typically not exogenous. Under Assumptions \ref{A1} to \ref{A3} alone, it can be shown that the satisfaction of the testable conditional independence \eqref{TI} implies and is implied by the joint satisfaction of Assumptions \ref{Afullmed} and \ref{Acimed}, as well as a conditional independence assumption on the treatment that is weaker than Assumption \ref{A4}. To better see this, we first note that  Assumption \ref{A4} implies the following two weaker marginal independence assumptions
\setcounter{assumption}{4}
\begin{assumptionA}[Conditional independence of the treatment and potential outcome]\label{A4a}  
  \begin{align*}  
   Y(d,m)  \bot D \mid X = x \quad \forall d \in \mathcal{D}, m \in \mathcal{M}, \text{ and } x \in \mathcal{X}.  
  \end{align*}  
\end{assumptionA}
\begin{assumptionB}[Conditional independence of the treatment and potential mediator]\label{A4b}  
  \begin{align*}  
   M(d)  \bot D \mid X = x \quad \forall d \in \mathcal{D}, m \in \mathcal{M}, \text{ and } x \in \mathcal{X}.  
  \end{align*}  
\end{assumptionB}  
Assumption \ref{A4a} states that, conditional on $X$, there are no confounders jointly affecting the treatment $D$ and the outcome $Y$ when the mediator $M$ is held fixed at some value $m$. In other words, there are no confounders that influence both the treatment and the outcome directly, other than through the mediator. It is also worth noting that, together with Assumption \ref{Afullmed} (full mediation), Assumption \ref{A4a} implies $$Y(m) \bot D | X=x \quad \forall m \in \mathcal{M}, \textrm{ and } x \in \mathcal{X}.$$ Assumption \ref{A4b} rules out confounders jointly affecting the treatment and the mediator. 

Theorem \ref{theorem2}, which closely follows Theorem 1 in \citet{huberkueck2022}, formally shows that, conditional on Assumptions \ref{A1} and \ref{A3}, the testable implication \eqref{TI} is necessary and sufficient for Assumptions \ref{Afullmed}, \ref{Acimed}, and \ref{A4a} when considering (observed) potential outcomes $Y(m)$ that correspond to the factual mediator realization $M = m$. Thus, \eqref{TI} requires that the mediator-outcome relationship is conditionally unconfounded, while confounders of the treatment and the mediator may be present, since Assumption \ref{A4b} is not required for the satisfaction of \eqref{TI}. The proof is provided in Appendix \ref{appendixtheorem2}.

\begin{theorem}\label{theorem2}
%Under Assumptions \ref{A1}, \ref{A3}: Assumptions \ref{A4a}, \ref{Afullmed}, \ref{Acimed} $\iff$  \ref{Afullmed}, \ref{Acimed}, \ref{A7}  $\iff$ (\ref{TI}).
Under Assumptions \ref{A1}, \ref{A3}: Assumptions \ref{A4a}, \ref{Afullmed}, \ref{Acimed}  $\iff$ (\ref{TI}).
\end{theorem}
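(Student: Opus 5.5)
We prove the two directions separately, working throughout with the causal graph implied by Assumption~\ref{A1}. Observed variables are $X \to D \to M \to Y$ with $X$ pointing into $D$, $M$, $Y$, plus unobserved nodes. Possible additional edges $D\to Y$ (a direct effect) and unobserved common causes $U$ of pairs among $(D,M,Y)$ given $X$ are either excluded or allowed depending on the assumptions. Following the SWIG template of \citet{richardson2013single}, statements about potential outcomes $Y(d,m)$, $Y(m)$ are read off the split-node graph. For the factual restriction noted before the theorem, we exploit consistency: $Y=Y(D,M)$, and under Assumption~\ref{Afullmed}, $Y=Y(M)$.

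\emph{Sufficiency} ($\Rightarrow$ direction from the assumptions to (\ref{TI})). Take Assumptions~\ref{Afullmed}, \ref{Acimed} and \ref{A4a}. First, Assumption~\ref{Afullmed} removes the edge $D\to Y$ (no direct effect given $X$), so every remaining path from $D$ to $Y$ runs through $M$ or through unobserved confounders. Second, Assumption~\ref{A4a} together with full mediation gives $Y(m)\bot D\mid X$, which rules out any unobserved common cause of $D$ and $Y$ given $X$. Third, Assumption~\ref{Acimed} rules out any unobserved common cause of $M$ and $Y$ given $X$. We then enumerate the paths from $D$ to $Y$ given $\{M,X\}$:
\begin{itemize}
\item $D\to M\to Y$ is a chain through $M$, so it is blocked by conditioning on $M$.
\item Paths through $X$ are blocked because $X$ is a non-collider and is conditioned on.
\item $D\leftarrow U\to M\to Y$ is blocked at $M$, since $M$ is a non-collider on this path.
\item The only potentially open collider path, $D\to M\leftarrow U'\to Y$ (or its variant with $D\leftarrow U\to M$), requires an unobserved $M$--$Y$ confounder $U'$, which Assumption~\ref{Acimed} excludes.
\end{itemize}
Hence $D$ and $Y$ are d-separated given $\{M,X\}$, and d-separation implies (\ref{TI}). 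Equivalently, and more directly: $f(Y\mid D,M,X)=f(Y(M)\mid D,M,X)=f(Y(m)\mid D,M=m,X)$, and the joint independence $Y(m)\bot (D,M)\mid X$ yields (\ref{TI}). We would need a short argument that \ref{Acimed} and \ref{A4a} combine into this joint independence in the graph. That combination holds precisely because no $U$ links $Y$ to either $D$ or $M$.

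\emph{Necessity} ($\Leftarrow$ direction from (\ref{TI}) to the assumptions). We argue by contraposition using faithfulness in Assumption~\ref{A1}, checking each assumption in turn:
\begin{itemize}
\item If Assumption~\ref{Afullmed} fails, there is an edge $D\to Y$ not through $M$ that conditioning on $\{M,X\}$ cannot block. Faithfulness then gives $Y\not\bot D\mid M,X$.
\item If Assumption~\ref{A4a} fails, there is an unobserved $D$--$Y$ confounder, which gives an open path $D\leftarrow U\to Y$.
\item If Assumption~\ref{Acimed} fails, there is an unobserved $U'\to M$ and $U'\to Y$. Because Assumption~\ref{A3} guarantees $D\to M$, the path $D\to M\leftarrow U'\to Y$ has $M$ as a collider. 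Conditioning on $M$ opens it, so faithfulness again implies dependence.
\end{itemize}
This is where Assumption~\ref{A3} enters: without a first stage, a violated \ref{Acimed} would be invisible to (\ref{TI}).

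The main obstacle is the necessity direction: translating failures of the potential-outcome statements into the existence of specific graph edges, and doing so restricted to factual outcomes $Y(m)$ at $M=m$, as the statement qualifies. Violations occurring only in counterfactual outcomes need not create a d-connecting path among observables. We therefore state necessity for the factual components. We also invoke faithfulness to exclude cancellations between, for example, a direct effect and a collider-induced association.
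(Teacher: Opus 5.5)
Your overall strategy is the paper's: under the assumptions you exclude every $D$--$Y$ path that is open given $\{M,X\}$. Conversely, you show that each violation produces such a path, with Assumption~\ref{A3} supplying the edge $D\rightarrow M$ that turns an $M$--$Y$ confounder into an open collider path.

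\textbf{The gap.} At the outset you fix a graph in which $X$ is a root and latent variables only link pairs among $(D,M,Y)$. Assumption~\ref{A1} imposes nothing of the kind; it only forbids $D$, $M$, $Y$ from causing $X$. The paper's proof explicitly allows latent variables pointing into $X$ (e.g.\ $Y\leftarrow U\rightarrow X\rightarrow M$). Once $X$ can have latent parents it can be a collider, and conditioning on it opens paths.

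Take $U_1\rightarrow D$, $U_1\rightarrow X$, $U_2\rightarrow X$, $U_2\rightarrow Y$, in addition to $X\rightarrow D$, $X\rightarrow M$, $X\rightarrow Y$, $D\rightarrow M\rightarrow Y$. Two of your steps fail in this graph.
\begin{itemize}
\item Your step ``paths through $X$ are blocked because $X$ is a non-collider'' is false: $D\leftarrow U_1\rightarrow X\leftarrow U_2\rightarrow Y$ is open given $\{M,X\}$.
\item Your step ``if Assumption~\ref{A4a} fails there is a path $D\leftarrow U\rightarrow Y$'' is false too. Assumption~\ref{A4a} ($Y(d,m)\bot D\mid X$) fails through that same collider path, yet no latent variable points into both $D$ and $Y$.
\end{itemize}
The theorem is consistent with this graph, since both Assumption~\ref{A4a} and \eqref{TI} fail, but your argument does not establish it.

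Assumption~\ref{Acimed} has the same problem. It can fail via $M\leftarrow U_3\rightarrow X\leftarrow U_4\rightarrow Y$ rather than through a direct $M$--$Y$ confounder. Your forward enumeration then misses the open path $D\rightarrow M\leftarrow U_3\rightarrow X\leftarrow U_4\rightarrow Y$. Your alternative potential-outcome route also depends on the restricted graph to get $Y(m)\bot(D,M)\mid X$, which does not follow from the two marginal independences in general.

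\textbf{The repair.} Use the assumptions with their full d-separation content in $G_{DM}$, as the paper does for Assumption~\ref{A4a}, rather than as ``no direct latent confounder.'' With latent variables as common causes of pairs, the only possible interior colliders on a $D$--$Y$ path are $M$ and $X$.

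\emph{Forward direction.} Take a $D$--$Y$ path open in $G$ given $\{M,X\}$ and split by the role of $M$ (this is the paper's (A)/(B) reduction).
\begin{itemize}
\item If $M$ is on it as a non-collider, the path is blocked.
\item If $M$ is on it as a collider, the segment from $M$ to $Y$ enters $M$ through an arrowhead and avoids $D$. So it lies in $G_{DM}$ and is open given $X$, contradicting Assumption~\ref{Acimed}.
\item If $M$ is not on it, then either the path is $D\rightarrow Y$, excluded by Assumption~\ref{Afullmed}, or it starts with an arrow into $D$ (since $D\rightarrow X$ is excluded by Assumption~\ref{A1}). In the latter case it lies in $G_{DM}$ and is open given $X$, contradicting Assumption~\ref{A4a}.
\end{itemize}

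\emph{Converse for Assumption~\ref{A4a}.} Deleting edges cannot create d-connection, so \eqref{TI} gives d-separation of $D$ and $Y$ given $\{M,X\}$ in $G_{DM}$. A path open there given $X$ can be blocked by adding $M$ only if $M$ is a non-collider on it. That is impossible, because $M$ has no outgoing edges in $G_{DM}$.

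\emph{Converse for Assumption~\ref{Acimed}.} The same idea fixes your step. Any $M$--$Y$ path open in $G_{DM}$ given $X$ enters $M$ through an arrowhead and avoids $D$. Prepending $D\rightarrow M$ therefore gives a $D$--$Y$ path open given $\{M,X\}$.
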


Our result in Theorem \ref{theorem2} implies that full mediation and the identifiability of the effect of the mediator $M$ on the outcome $Y$ can be tested, along with the absence of unobservables affecting both $D$ and $Y$ directly (i.e., not through $M$). However, the identification of the indirect treatment effect, and thus the total treatment effect under full mediation, cannot be verified based on our testable implication, because \eqref{TI} does not provide information about the existence of confounders jointly affecting $D$ and $M$ conditional on $X$. Such confounders would imply that the first-stage effect of $D$ on $M$, as postulated in Assumption \ref{A3}, cannot be identified. As a result, the indirect (and total) effect of $D$ on $Y$ operating through $M$ cannot be identified, even if the second-stage effect of $M$ on $Y$ is identifiable. 

Such a causal scenario is depicted in Figure \ref{dag_2}. In contrast to Figure \ref{dag_1}, $U_2$ now jointly affects both $D$ and $M$, thereby introducing confounding when measuring the effect of $D$ on $M$. This confounding rules out the identification of the (indirect or total) effect of $D$ on $Y$ through $M$. At the same time, $D$ remains conditionally independent of $Y$ given both $M$ and $X$, ensuring that the testable implication still holds.
\begin{figure}[!htp]
	\centering
	\caption{Causal model with treatment-mediator confounding }
	\label{dag_2}
	%\begin{subfigure}
		\centering
		\begin{tikzpicture}
			% DAG 1
			\node[state] (D) at (0,0) {$D$};
			\node[state] (M) [right =of D] {$M$};
			\node[state] (Y) [right =of M] {$Y$};
			\node[state] (X) [above =of M] {$X$};
			\node[state] (U1) [below =of D] {$U1$};
		  \node[state] (U2) [below =of M] {$U2$};
			\node[state] (U3) [below =of Y] {$U3$};
			% Directed edge
			\path (D) edge (M);
			\path (M) edge (Y);
			\path (X) edge (D);
			\path (X) edge (M);
			\path (X) edge (Y);
			\path[dashed] (U1) edge (D);
            \path[dashed] (U2) edge (D);
			\path[dashed] (U2) edge (M);
			\path[dashed] (U3) edge (Y);
		\end{tikzpicture}
\end{figure}
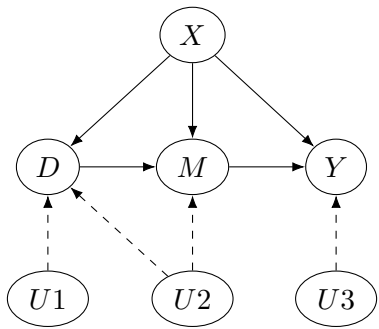

%\begin{theorem}\label{theorem2}
%Conditional on Assumptions \ref{A1} to \ref{A3}, it holds that:
%\begin{eqnarray}\label{mainresult2}
%\Pr(Y(d,m) = Y(m) | X=x) = 1, \quad Y(m) \bot M | X=x, \quad Y(m) \bot D | X=x \notag \\
%\iff \quad Y \bot D | M=m, X=x \quad \forall d \in \mathcal{D}, m \in \mathcal{M}, \textrm{ and } x \in \mathcal{X}.
%\end{eqnarray}
%Conditional on Assumptions \ref{A1} to \ref{A3}, the testable implication \( Y \bot D | M=m, X=x \) is both necessary and sufficient for the joint satisfaction of Assumptions \ref{Afullmed}, \ref{Acimed}, and \ref{A7}, when considering potential outcomes \( Y(m) \) matching the factual mediator assignment \( M=m \). The proof is given in Appendix \ref{appendixtheorem2}.
%\end{theorem}

%In many empirical applications, it is of interest to test the validity of Assumptions \ref{Afullmed}, \ref{Acimed}, and \ref{A4} (rather than Assumptions \ref{Afullmed}, \ref{Acimed}, and \ref{A4a}, as considered in Theorem \ref{theorem2}), conditional on Assumptions \ref{A1} to \ref{A3}. Such a test would also assess the identifiability of the total treatment effect of $D$ on $Y$, which, under our assumptions, coincides with the indirect effect operating through $M$. This is not possible based on the previously derived testable implication. 

\section{Comparisons with the back-door and front-door criterion}\label{Assumptions3}

This section discusses the relation between our testable implication \eqref{TI} and two identification strategies aiming at identifying total treatment effects or related causal parameters given covariates, such as conditional potential outcome distributions or conditional average treatment effects (CATE). Using the terminology of \citet{Pearl00}, the first strategy relies on the back-door (BD) criterion by controlling for $X$. This approach permits identification of conditional causal parameters under conditional independence of the treatment and potential post-treatment variables, as imposed in Assumption \ref{A4}, which is commonly referred to as selection-on-observables or unconfoundedness; see, for instance, the discussion in \citet{Im04}. The second strategy applies the front-door (FD) criterion conditional on $X$, which yields identification under conditional independence of the treatment and the potential mediator, as postulated in Assumption \ref{A4b}, conditional full mediation as postulated in Assumption \ref{Afullmed}, and conditional independence of the mediator and potential outcomes given covariates $X$ and treatment $D$; see, e.g., \cite*{Pearl95}, \cite*{Fulcheretal2019}, and \cite*{Bellemare2024}.

A natural question is whether one can learn something about the joint satisfaction of the identifying assumptions by verifying whether the identification results obtained from the BD and FD approaches coincide, for instance by means of a \citet{Hausman78}-type overidentification test, and how such an approach compares to our testable implication \eqref{TI}. To discuss such a testing approach more formally, consider the conditional probability that a discrete potential outcome takes a specific value $y$ under treatment $d$, $\Pr(Y(d,M(d))=y \mid X=x)$. Under the selection-on-observables assumption, the BD approach identifies this quantity by the conditional probability appearing on the left-hand side of the equals sign in equation \eqref{FDBD} below. In contrast, the conditional FD approach aims at computing $\Pr(Y(d,M(d))=y \mid X=x)$ by the expression appearing on the right-hand side of the equals sign in equation \eqref{FDBD}, for a discrete treatment, mediator, and outcome. An overidentification test may therefore be constructed by verifying whether the parameter values obtained from both approaches are equivalent, that is, whether it holds that
\begin{align*}
&\Pr(Y=y|D=d, X=x)=  \label{FDBD} \tag{BD=FD} \\ 
&\sum_m \Pr(M=m|D=d,X=x)\cdot \left(\sum_{d'} \Pr(Y=y|D=d',M=m,X=x)\cdot \Pr(D=d'|X=x)\right)\\
&\forall y \in \mathcal{Y}, d \in \mathcal{D}, \textrm{ and }  x  \in \mathcal{X}.
\end{align*}

Relating this approach to our previous testable implication, it turns out that the satisfaction of \eqref{TI} implies \eqref{FDBD}, such that the conditional parameters given $X$ based on FD and BD coincide at the population level, which is formally stated in Theorem \ref{theorem3}. However, this immediately implies that a comparison of BD and FD, although it appears to be an overidentification test, does not allow one to test whether all assumptions required for identification under BD and FD hold jointly. The reason is that the null hypothesis of FD and BD yielding identical parameter values holds even if Assumption \ref{A4b} is violated, see Theorem \ref{theorem2} in the previous section. Such a violation introduces bias in the measurement of conditional potential outcomes under both FD and BD that is equal in magnitude for the two methods. Proof of Theorem \ref{theorem3} is provided in Appendix~\ref{appendixtheorem3}.

\begin{theorem}\label{theorem3}
 \eqref{TI} $\Rightarrow$ \eqref{FDBD}.%  when considering potential outcomes $Y(m)$ matching the factual mediator assignment $M=m$.
 % This one is only about observable quantities, so disclaimer is not neeeded.
\end{theorem}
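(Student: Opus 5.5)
The plan is a direct computation on the right-hand side of \eqref{FDBD}, using only \eqref{TI} and the law of total probability. No causal assumptions are needed, since both \eqref{TI} and \eqref{FDBD} are statements about the observed joint distribution of $(Y,D,M,X)$. Throughout we work with discrete $Y$, $D$, $M$ as in the display. Continuous cases follow by replacing sums with integrals and probabilities with densities.

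First, fix $y\in\mathcal{Y}$, $d\in\mathcal{D}$ and $x\in\mathcal{X}$. Under \eqref{TI}, for every $m$ and every $d'$ at which the conditioning event has positive probability,
\begin{align*}
\Pr(Y=y\mid D=d',M=m,X=x)=\Pr(Y=y\mid M=m,X=x).
\end{align*}
Hence the inner sum in \eqref{FDBD} becomes $\Pr(Y=y\mid M=m,X=x)\sum_{d'}\Pr(D=d'\mid X=x)=\Pr(Y=y\mid M=m,X=x)$. Applying \eqref{TI} once more in the reverse direction, we replace this by $\Pr(Y=y\mid D=d,M=m,X=x)$. The right-hand side is then
\begin{align*}
\sum_m \Pr(M=m\mid D=d,X=x)\Pr(Y=y\mid D=d,M=m,X=x),
\end{align*}
which equals $\Pr(Y=y\mid D=d,X=x)$ by the law of total probability. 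This is the left-hand side of \eqref{FDBD}.

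The only delicate point is well-definedness of the conditional probabilities. The inner sum involves $\Pr(Y=y\mid D=d',M=m,X=x)$ for every $d'$, while the outer weights involve only $m$ with $\Pr(M=m\mid D=d,X=x)>0$. I would handle this in one of two ways. One option is to invoke common support (Assumption \ref{A2}), which makes every conditioning event have positive probability. The other is to read \eqref{FDBD} as restricted to the support, noting that terms with $\Pr(D=d'\mid X=x)=0$ or $\Pr(M=m\mid D=d,X=x)=0$ contribute nothing. The required replacements are then only needed where they are defined, and the argument goes through unchanged.
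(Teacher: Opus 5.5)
Your proposal is correct and is essentially the paper's proof. Both arguments use \eqref{TI} to collapse the inner sum $\sum_{d'}\Pr(Y=y\mid D=d',M=m,X=x)\Pr(D=d'\mid X=x)$ to $\Pr(Y=y\mid D=d,M=m,X=x)$ and then apply the law of total probability over $m$; the paper equates the $d$ and $d'$ terms directly, whereas you pass through $\Pr(Y=y\mid M=m,X=x)$. Your attention to well-definedness goes beyond the paper, but of your two fixes you should rely on common support (Assumption~\ref{A2}). Under the \emph{restrict to support} reading, the front-door expression is itself undefined whenever $\Pr(M=m\mid D=d,X=x)>0$ but $\Pr(M=m\mid D=d',X=x)=0$ for some $d'$ with $\Pr(D=d'\mid X=x)>0$.
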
 

While \eqref{TI} implies \eqref{FDBD}, the opposite does not hold in general. BD and FD may be equivalent at the population level even though \eqref{TI} fails. This implies that equivalence of FD and BD is not sufficient for inferring full mediation (whereas \eqref{TI} is sufficient by Theorem \ref{theorem2}). The reason is that \eqref{FDBD} imposes an equality condition on weighted averages over the mediator distribution (see the term $\sum_m \Pr(M=m \mid D=d, X=x)$), which may hold due to cancellation effects across values of $m$, even if the conditional distributions $\Pr(Y \mid D=d, M=m, X=x)$ differ across values of $d$.

For equivalence of \eqref{FDBD} to also imply \eqref{TI}, an additional assumption is required. One example of such a sufficient condition is a parametric assumption imposing additive separability of the effects of the treatment and the mediator on the outcome, conditional on covariates. Specifically, this assumption requires that the conditional distribution (or mean) of the outcome given the treatment, mediator, and covariates can be decomposed into a component depending on the treatment and covariates, and a separate component depending on the mediator and covariates. This rules out interaction effects between the treatment and the mediator on the outcome, which substantially restricts effect heterogeneity. Such no-interaction assumptions have been considered in the mediation analysis literature \citep{RoGr92,Robins2003} to achieve identification of direct and indirect effects without imposing (cross-world) independence assumptions between treatments and counterfactual (on top of factual) outcomes \citep{andrews2021insights}, in particular in the presence of post-treatment confounders.

%completeness assumption. Completeness rules out cancellation effects by requiring that the set of conditional distributions of the mediator given the treatment and covariates is sufficiently rich to uniquely identify functions of the mediator from their conditional expectations. In other words, completeness ensures that equality of weighted averages over $M$ implies equality of the underlying functions of $M$. Formally, completeness is stated in the following assumption:
\setcounter{assumption}{6}
\begin{assumption}[Separability]\label{Asepar}
There exist two functions $\alpha: \mathcal{D} \times \mathcal{X} \rightarrow \mathbb{R}$ and $\beta: \mathcal{M} \times \mathcal{X} \rightarrow \mathbb{R},$ such that for all $y \in \mathcal{Y}, m \in \mathcal{M}, d \in \mathcal{D}, x \in \mathcal{X}$:
$$\Pr(Y=y|D=d,M=m,X=x) = \alpha(y,d,x) + \beta(y,m,x).$$

%For any measurable function $h:\mathcal{M}\to\mathbb{R}$ and for all
%$x \in \mathcal{X}$,
%$$
%\sum_m h(m)\Pr(M=m \mid D=d, X=x)=0 \quad \forall d\in\mathcal{D}\quad \textrm{ implies }\quad h(m)=0 \quad \forall %m\in\mathcal{M}.
%$$
\end{assumption}

%\begin{assumption}[Completeness]\label{Asepar}
%For any measurable function $h:\mathcal{M}\to\mathbb{R}$:
%$$\sum_{m} h(m)\Pr(M=m \mid D=d, X=x)=0 \quad \forall d\in\mathcal{D}$$
%implies
%$$h(m)=0 \quad \forall m\in\mathcal{M}, \textrm{almost surely in } X=x.$$
%\end{assumption}

Theorem \ref{theorem4} shows that, conditional on separability, \eqref{TI} is both necessary and sufficient for equivalence of the BD and FD approaches. As a consequence, under Assumption \ref{Asepar}, the assumptions underlying Theorems \ref{theorem1} and \ref{theorem2} can also be tested based on a comparison of FD and BD. The proof is provided in Appendix~\ref{appendixtheorem4}.
\begin{theorem}\label{theorem4}
Under Assumption \ref{Asepar}:
\eqref{TI} $\iff$ \eqref{FDBD}.
%when considering potential outcomes $Y(m)$ matching the factual mediator assignment $M=m$.
\end{theorem}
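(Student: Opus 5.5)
The forward direction \eqref{TI} $\Rightarrow$ \eqref{FDBD} is Theorem~\ref{theorem3}, so only the converse needs proof, and I plan to do it by direct substitution of Assumption~\ref{Asepar} into both sides of \eqref{FDBD}. Throughout I would work pointwise in $(y,d,x)$. I would invoke the common support condition of Assumption~\ref{A2} so that all conditional probabilities $\Pr(Y=y\mid D=d,M=m,X=x)$, $\Pr(M=m\mid D=d,X=x)$ and $\Pr(D=d'\mid X=x)$ are well defined and strictly positive where needed.

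First I would expand the left-hand side of \eqref{FDBD} by the law of total probability over $M$. Using $\Pr(Y=y\mid D=d,M=m,X=x)=\alpha(y,d,x)+\beta(y,m,x)$ and $\sum_m \Pr(M=m\mid D=d,X=x)=1$, this gives
\begin{align*}
\Pr(Y=y\mid D=d,X=x)=\alpha(y,d,x)+\sum_m \Pr(M=m\mid D=d,X=x)\,\beta(y,m,x).
\end{align*}
Next I would treat the right-hand side the same way. The inner sum over $d'$ equals $\bar\alpha(y,x)+\beta(y,m,x)$, where $\bar\alpha(y,x):=\sum_{d'}\Pr(D=d'\mid X=x)\,\alpha(y,d',x)$, because $\sum_{d'}\Pr(D=d'\mid X=x)=1$. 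Summing against $\Pr(M=m\mid D=d,X=x)$ then yields $\bar\alpha(y,x)+\sum_m \Pr(M=m\mid D=d,X=x)\,\beta(y,m,x)$.

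The $\beta$-terms on the two sides are identical and cancel. Hence \eqref{FDBD} reduces to $\alpha(y,d,x)=\bar\alpha(y,x)$ for all $y,d,x$, so $\alpha$ does not depend on $d$. Substituting back into Assumption~\ref{Asepar} gives $\Pr(Y=y\mid D=d,M=m,X=x)=\bar\alpha(y,x)+\beta(y,m,x)$. This does not depend on $d$, which is exactly \eqref{TI} for a discrete outcome.

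I do not expect a real obstacle; the argument is essentially algebraic. The one point needing care is well-definedness. The identity must hold for every $d$ in the support, which is guaranteed by positivity of $\Pr(D=d\mid X=x)$ and $\Pr(M=m\mid D=d,X=x)$ under Assumption~\ref{A2}. Without it, \eqref{TI} would only follow on the region of common support.
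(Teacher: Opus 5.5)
Your proposal is correct and follows the paper's proof essentially step for step. The forward direction is Theorem~\ref{theorem3}. For the converse, the paper likewise expands the left-hand side of \eqref{FDBD} by the law of total probability, substitutes Assumption~\ref{Asepar}, and cancels the identical $\beta$-terms. It then concludes that $\alpha(y,d,x)=\sum_{d'}\alpha(y,d',x)\Pr(D=d'\mid X=x)$ cannot depend on $d$, which yields \eqref{TI}.

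Your appeal to Assumption~\ref{A2} is not among the theorem's hypotheses. It only makes explicit the well-definedness of the conditional probabilities, which the paper's proof takes for granted, as does Assumption~\ref{Asepar} itself by being stated for all $(y,d,m,x)$.
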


Under the separability assumption \ref{Asepar}, it follows that the $\alpha(y,d,x)$ component of the conditional probability 
$\Pr(Y=y \mid D=d, M=m, X=x)$ does not depend on the treatment value $d$, as shown in the proof of Theorem \ref{theorem4} 
in Appendix~\ref{appendixtheorem4}. This essentially implies the absence of a direct effect. 
The empirical plausibility of separability strongly depends on the specific application. For instance, if the magnitude of a biomarker's mediated effect depends on the administered drug dose (i.e., the treatment), this constitutes a treatment-mediator interaction, and the separability assumption is violated. We note that several mediation studies have proposed relaxations of, or alternatives to, the no-interaction assumption; see, for example, \citet{ImYa2011}, \citet{TchetgenTchetgenVanderWeele2012}, and  references provided in the survey by \citet{huber2020mediation}.

\section{Testing}\label{testing}

This section presents the empirical implementation of our testing approach, focusing on a version of the testable implication \eqref{TI} that is appropriate when evaluating (conditional) average treatment effects. Accordingly, we consider conditional mean independence rather than full independence as the relevant object for testing. This choice is motivated by the fact that most empirical applications of mediation analysis focus on average (direct and indirect) effects, for which conditional mean independence is the binding constraint. Testing full conditional independence would require nonparametric distributional tests that are considerably more demanding in terms of sample size and dimensionality, while conditional mean independence can be tested efficiently using the doubly robust DML framework below. We propose a test based on double machine learning (DML) for the following null hypothesis $H_0$, which is equivalent to conditional mean independence between the treatment $D$ and the outcome $Y$, given the mediator $M$ and covariates $X$:
\begin{align}\label{nullhyp1} 
H_0: E[Y \mid M = m, X = x, D = d] - E[Y \mid M = m, X = x] = 0 \quad \forall m \in \mathcal{M},\ x \in \mathcal{X},\ d \in \mathcal{D}.
\end{align}

We apply the DR testing approach proposed by \citet{Apfel2023learning}, originally developed to jointly test the validity of an instrument and the exogeneity of a treatment conditional on covariates. In our application, the treatment corresponds to the instrument in their setting, and the mediator corresponds to their treatment. For a binary treatment \( D \), we define the propensity score as \( p(M, X) = \Pr(D = 1 \mid M, X) \). Furthermore, let \( \mu(M, X, D) = E[Y \mid M, X, D] \) denote the conditional mean outcome. The DR score function for the binary treatment case is
\begin{align}\label{score3}  
&\quad \tilde{\psi}(W,\theta,\eta)\\
&=(\mu(M,X,1)-\mu(M,X,0))^2\notag\\&+2(\mu(M,X,1)-\mu(M,X,0))\left(\frac{(Y-\mu(M,X,1))\cdot D}{p(M,X)}
-\frac{(Y-\mu(M,X,0))\cdot (1-D)}{1-p(M,X)}\right)\notag\\
&+\mu(M,X,1)-\mu(M,X,0)+\left(\frac{(Y-\mu(M,X,1))\cdot D}{p(M,X)}
-\frac{(Y-\mu(M,X,0))\cdot (1-D)}{1-p(M,X)}\right)-\theta,\notag
\end{align}
where the target parameter $\tilde\theta$ is  
\begin{align}
\tilde\theta = E\left[\left(\mu(M,X,1)-\mu(M,X,0)\right)^2\right] + E\left[\mu(M,X,1)-\mu(M,X,0)\right].
\end{align}
Testing the null hypothesis \eqref{nullhyp1} based on \eqref{score3} corresponds to an aggregate \( L_2 \)-type measure, commonly used in specification tests relying on nonparametric regression \citep{racine1997consistent, racine2006testing, hong1995consistent, wooldridge1992test}. 

\citet{Apfel2023learning} also extend \eqref{score3} to multivalued treatments \( D \), which requires discretizing \( D \) when it is continuous. Let \( \{D_l\}_{l=1}^L \) be a partition of \( \mathcal{D} \) such that \( \bigcup_l D_l = \mathcal{D} \). For discrete \( D \), we define \( D_l = d_l \) for those values with \( \Pr(D = d_l) > c \) for some constant \( c > 0 \), while for continuous \( D \), the partition can be defined using quantiles. Let \( 1(D \in D_l) \) be the indicator function, and define the treatment propensity score as \( p_l(M, X) = \Pr(D \in D_l \mid M, X) \). The DR score function for multivalued \( D \) is  
\begin{align}\label{score4}
&\quad \psi(W,\theta,\eta)\\
&=\sum_{l=1}^L (\mu(M,X,D\in D_l)-\mu(M,X,D\notin D_l))^2\notag\\&
+\sum_{l=1}^L 2 (\mu(M,X,D\in D_l)-\mu(M,X,D\notin D_l))\notag\\
&\quad\left(\frac{(Y-\mu(M,X,D\in D_l)) 1(D\in D_l)}{p_l(M,X)}
-\frac{(Y-\mu(M,X,D\notin D_l))1(D\notin D_l)}{1-p_l(M,X)}\right)\notag\\&
+\sum_{l=1}^L (\mu(M,X,D\in D_l)-\mu(M,X,D\notin D_l))\notag\\&
+\sum_{l=1}^L \frac{(Y-\mu(M,X,D\in D_l)) 1(D\in D_l)}{p_l(M,X)}
-\frac{(Y-\mu(M,X,D\notin D_l)) 1(D\notin D_l)}{1-p_l(M,X)} - \theta\notag,
\end{align}
where the target parameter $\theta$ is  
\begin{align}
  \theta =   E\left[\sum_{l=1}^L(\mu(M,X,D\in D_l)-\mu(M,X,D\notin D_l))^2+(\mu(M,X,D\in D_l)-\mu(M,X,D\notin D_l))\right].
\end{align}

\citet{Apfel2023learning} show that both score functions, \( \tilde{\psi} \) and \( \psi \) have expectation zero under the null and satisfy Neyman orthogonality. This implies that a sample analog of \( \theta \) is \( \sqrt{n} \)-consistent - where \( n \) denotes the sample size - and asymptotically normal under specific regularity conditions. A key requirement is that the product of the estimation errors for \( \mu(M, X, D) \) and \( p(M, X) \) vanishes at rate \( o(n^{-1/2}) \) as the sample size grows. This condition is met if the estimation error of both models is of order \( o(n^{-1/4}) \). Another important condition is that independent subsamples are used for estimating \( \mu(M, X, D) \) and \( p(M, X) \), on the one hand, and for estimating the score function \( \tilde{\psi} \), on the other hand. This is obtained through cross-fitting, i.e., by iteratively swapping the subsamples used for the estimation tasks, as described in \citet{Chetal2018}.

%For comparing ATE under selection-on-observables and front-door, use DR procedures - see papers for frontdoor:
%\cite*{Fulcheretal2019} and \cite*{Gorbachetal2023} discuss doubly robust (DR) method for the indirect effect/ATE estimation.
%Concerning the comparison of the ATE based on the conditional front-door estimate and the ATE based on the selection-on-observables assumption (which is yet another test). %The conditional DR front-door estimator of \cite*\cite*{Gorbachetal2023} is available in R code at tetianagorbach/semiparametric_inference_ACE_BD_FD_TD_efficiency_robustness. For selection-on-observables take the treatDML command of the causalweight package. 
%Now, ideally, you would compute a score that is the difference between the DR score functions of front-door and selection-on-observables for the test. Then the mean of that difference score function would be test statistic and the variance would be useful for inference 

Finally, we present an analogous DML approach for testing differences in conditional mean outcomes across the FD and BD approaches (even though it appears less attractive given the caveats discussed in Section \ref{Assumptions3}), which corresponds to a conditional mean version of \eqref{FDBD}. To this end, we denote by $q(d,x)=E[Y|D=d, X=x]$ the conditional mean outcome given the treatment and the covariates. This is the parameter considered by the BD approach and corresponds to the conditional mean potential outcome $E[Y(d,M(d)) \mid X=x]$ if the identifying assumptions underlying BD, most importantly a selection-on-observables assumption such as Assumption \ref{A4}, are satisfied. Concerning the FD approach, we denote by $\nu(m,x)$ the nested conditional mean outcome given $X=x$, which is obtained by averaging the conditional mean outcome $\mu(m,x,d)$ over treatment values $D$ while holding the mediator fixed at value $m$:
\begin{eqnarray}
\nu(m,x)=\sum_{d} \mu(m,x,d) f (d|X=x)=E[ \mu(m,x,D) | X=x],
\end{eqnarray}
where $f(d \mid X)=\Pr(D=d \mid X)$ denotes the treatment propensity score, and the treatment and mediator are assumed to be discrete. 

As discussed, for instance, in \citet{frsp2019}, the conditional FD criterion corresponds to the conditional mean of $\nu(m,x)$ given the treatment and the covariates, $E[ \nu(M,x) |D=d,X=x]$, which equals the conditional mean potential outcome $E[Y(d,M(d)) \mid X=x]$ if the identifying assumptions underlying the FD approach are satisfied. \cite*{Fulcheretal2019} also present an alternative representation for this parameter, namely,
\begin{eqnarray}\label{frontdoormean2}
\sum_{m} \mu(m,x,D) f(m|D=d,X=x),
\end{eqnarray}
where $f(m|D,X)=\Pr(M=m|D,X)$ denotes the conditional probability of the mediator taking value $m$ (the mediator propensity score). It is worth noting that in equation \eqref{frontdoormean2}, $D$ in $\mu(m,x,D)$ is evaluated at observed treatment values while $D=d$ is fixed in the mediator propensity score $f(m|D=d,X=x)$. For ease of notation, we will henceforth denote the expression in \eqref{frontdoormean2} by $\zeta(d,x)$. 

We aim to test for differences in the conditional mean parameters obtained under the BD and FD approaches and therefore consider the following null hypothesis, which states that their difference is zero:
\begin{align}\label{nullhyp2}
H_0: q(d,x) - \zeta(d,x) = 0. \quad \forall x \in \mathcal{X},\ d \in \mathcal{D}.
\end{align}
 Applying the approach of \citet{Apfel2023learning} to testing \eqref{nullhyp2}, we consider as target parameter
\begin{align}
  \bar\theta =   E\left[  (q(D,X)-  \zeta(D,X))^2 \right]+ E\left[  q(D,X)-  \zeta(D,X) \right].
\end{align}
Considering a discrete treatment, we note that the DR score function of $q(d,X)$ is given by the following expression, see e.g.  \cite*{Robins+94} and \cite*{RoRo95}:
\begin{align}\label{DRselobs}
q(d,X)+\frac{(Y-q(D,X))\cdot I(D = d)}{f(d \mid X)},
\end{align}
Furthermore, the DR function of the FD expression  $\zeta(d,X)$ is (for discrete $D$ and $M$) given by the following expression, see \cite*{Fulcheretal2019} and \cite*{Gorbachetal2023}:
\begin{align}\label{DRfrontdoor}
&\zeta(d,X) + [Y -  \mu(M,X,D)]\cdot \frac{f(M \mid d, X) }{  f(M \mid D, X)} \\
&+ \frac{I(D = d)}{f(D \mid X)} 
\left[
    \underbrace{\sum_{d} \mu( M,X,d) f (d|X)}_{\nu(M,X)} - \sum_{d,m} \mu( m ,X,d) f(m|D,X) f (d|X)
\right] .\notag
\end{align}

Analogously to the score functions in \eqref{score3} and \eqref{score4}, the score function for the difference between the DR score functions of the BD and FD approaches is given by
\begin{align}
\label{score5}  
\quad \bar{\psi}(W,\theta,\eta)&=(q(d,X)-\zeta(d,X))^2\notag\\&+2(q(d,X)-\zeta(d,X))\left( r(d,X)
-s(d,M,X)\right)\\
&+q(d,X)-\zeta(d,X)+\left(r(d,X)-s(d,M,X)\right)-\bar\theta,\notag
\end{align}
where
\begin{align}\label{DRselobssupp}
r(d,X)=\frac{(Y-q(d,X))\cdot I\{D=d\}}{f(d \mid X)}
\end{align}
and
\begin{align}\label{DRfrontdoorsupp}
s(d,M,X)&=[Y -  \mu(M,X,D)]\cdot \frac{f(M \mid d, X) }{  f(M \mid D, X)} \\
&+ \frac{I(D = d)}{f(D \mid X)} 
\left[
    \sum_{d} \mu( M,X,d) f (d|X) - \sum_{d,m} \mu( m ,X,d) f(m|D,X) f (d|X)
\right].\notag
\end{align}

\section{Simulations}\label{simulations}

This section evaluates the finite-sample performance of our proposed testing approach through a simulation study. We assess our method under varying levels of confounding, direct treatment effects, and sample sizes. Our first simulation design replicates the causal structure described in Section \ref{Assumptions1}, specified as follows:
\begin{eqnarray}\label{sim1}
	D &=& I\{X'\beta + U_D>0\},\notag \\
	M &=& 0.5 D +  X'\beta  + \delta U_{MY} + U_M,\notag \\
	Y &=&   M + X'\beta + \gamma D + \delta U_{MY}  + U_Y,\notag \\
	X &\sim& \mathcal{N}(0,\Sigma_X), \quad U_D, U_M, U_Y, U_{MY} \stackrel{\mathrm{iid}}{\sim} \mathcal{N}(0,1), \notag
\end{eqnarray}
with $X$, $U_D$, $U_M$, $U_Y$, and $U_{MY}$ being mutually independent.

The binary treatment variable $D$ is generated as an indicator function $I\{\cdot\}$ based on a linear combination of pre-treatment covariates $X$  - for a nonzero coefficient vector $\beta$ - and an unobserved error term $U_D$ that affects $D$ alone. The mediator $M$ is specified as a linear function of $D$ and $X$, together with an idiosyncratic shock $U_M$ and a shared unobservable $U_{MY}$ whose loading $\delta$ governs mediator-outcome confounding. The outcome variable $Y$ is a linear function of mediator $M$, covariates $X$ (if $\beta \neq 0$), treatment $D$ (if $\gamma \neq 0$, implying a direct effect), an idiosyncratic shock $U_Y$, and the same shared unobservable $U_{MY}$ (with loading $\delta$). When $\gamma = 0$, the treatment effect is fully mediated by $M$, satisfying Assumption \ref{Afullmed}. When $\gamma \neq 0$, a direct effect of $D$ on $Y$ exists, violating full mediation. The unobserved terms $U_D, U_M, U_Y, U_{MY}$ are standard normally distributed and independent of each other and of $X$, which is a vector of $p$ normally distributed covariates with zero means and covariance matrix $\Sigma_X$, where the $(i,j)$th element of $\Sigma_X$ corresponds to $0.5^{|i-j|}$. The coefficient vector $\beta$ gauges the effects of the covariates on $Y$, $M$, and $D$, quantifying the degree of confounding due to observables. The $i$th element of $\beta$ is set to $0.5/i^2$ for $i=1,\ldots,p$, implying a quadratic decay in the relevance of any additional covariate $i$ for confounding. Crucially, in design \eqref{sim1} the confounder $U_{MY}$ enters only $M$ and $Y$, so $\delta$ isolates mediator-outcome confounding without spilling into $D$; this ensures that Assumption \ref{A4} continues to hold for all values of $(\delta,\gamma)$ used below.

To facilitate the interpretation of the violation parameters, we anchor each of them to a standard sensitivity-analysis quantity. For the confounding loading $\delta$ (and analogously for $\lambda$ and $\theta$ in design \eqref{sim2}), the value $\rho$ corresponds to a partial $R^2$ of the unobserved confounder in the affected equation, conditional on the other regressors, equal to $\rho^2 / (1 + \rho^2)$. This is the quantity used as the basis of the sensitivity-analysis framework of \citet{cinelli2020}. For the direct-effect parameter $\gamma$, the corresponding interpretable quantity is the fraction of the total average treatment effect that bypasses the mediator: $\gamma / (0.5 + \gamma)$, since under design \eqref{sim1} the total ATE equals $0.5 + \gamma$. The values $\delta = 0.5$ and $\gamma = 0.125$ used in our main simulations both correspond to a violation strength of $20\%$ on their respective natural scales: a confounder explaining $20\%$ of the residual variation of the affected variable, or a direct effect accounting for $20\%$ of the total ATE. This anchoring places our chosen magnitudes within the range of empirical applications considered by \citet{cinelli2020} in their flagship benchmarking exercise.

Our study comprises 1000 simulations for each of two distinct sample sizes \( n \) consisting of 1000 and 4000 observations respectively, with the number of covariates \( p \) set to 200. To test the null hypothesis \eqref{nullhyp1}, we estimate the conditional means $\mu(M,X,1)$ and $\mu(M,X,0)$ as well as the propensity score $p(M,X)$ involved in the score function \eqref{score3} using lasso regression via the \texttt{SuperLearner} package in \textsf{R}. Other machine learning methods, such as random forests, gradient boosting, or ensemble learners, could alternatively be employed, provided they satisfy the convergence rate requirements outlined in Section \ref{testing}. We utilize 5-fold cross-fitting and trim observations with propensity scores below 0.05 or above 0.95 to prevent inflation of the inverse probability weights.

The parameters $\delta$ and $\gamma$ allow us to systematically introduce violations of the mediator exogeneity assumption (Assumption \ref{Acimed}) and full mediation assumption (Assumption \ref{Afullmed}), respectively. Setting $\delta = 0$ and $\gamma = 0$ implies satisfaction of both assumptions, such that the testable implication \eqref{TI} holds. When $\delta \neq 0$, the shared unobservable $U_{MY}$ becomes a confounder jointly affecting $M$ and $Y$, violating Assumption \ref{Acimed} while leaving Assumption \ref{A4} intact. When $\gamma \neq 0$, the treatment exhibits a direct effect on the outcome not operating through $M$, violating Assumption \ref{Afullmed}. Table \ref{tab:simulations_test1} presents the results of our simulations under various choices of $\delta$ and $\gamma$, reporting the average estimate $\hat{\theta}$, the standard deviation of the estimates across simulations (std.\ $\hat{\theta}$), the average standard error (mean SE), and the test's rejection rate (rej.\ rate) at the 5\% level of statistical significance.

\begin{table}[ht!]
  \centering
  \begin{adjustbox}{max width=\textwidth}
  \begin{threeparttable}
  \caption{Simulations: Test for Full Mediation and Mediator Exogeneity}\label{tab:simulations_test1}
  
  \begin{tabular}{c|c c c c}
  \midrule
  \midrule
  sample size & $\hat{\theta}$ & std.\ $\hat{\theta}$ & mean SE & rej.\ rate \tabularnewline
  \midrule
  \multicolumn{5}{c}{$\delta = 0$ \& $\gamma = 0$}\tabularnewline
  \midrule
  1000 & $-$0.003 & 0.091 & 0.095 & 0.046 \tabularnewline
  4000 & $-$0.003 & 0.040 & 0.041 & 0.050 \tabularnewline
  \midrule
    \multicolumn{5}{c}{$\delta = 0.5$ \& $\gamma = 0$ \quad ($20\%$ partial $R^2$, M-Y confounding)}\tabularnewline
    \midrule
    1000 & $-$0.116 & 0.083 & 0.091 & 0.282 \tabularnewline
    4000 & $-$0.097 & 0.035 & 0.038 & 0.736 \tabularnewline
  \midrule
  \multicolumn{5}{c}{$\delta = 0$ \& $\gamma = 0.125$ \quad ($20\%$ of total ATE is direct)}\tabularnewline
  \midrule
  1000 & 0.139 & 0.107 & 0.111 & 0.208 \tabularnewline
  4000 & 0.140 & 0.047 & 0.050 & 0.835 \tabularnewline
  \midrule
  \midrule
  \end{tabular}
  \begin{tablenotes}
    \small
    \item Notes: Column `$\hat{\theta}$' gives the average estimate of the target parameter across simulations. `std.\ $\hat{\theta}$' gives the standard deviation of estimates across simulations. `mean SE' gives the average standard error. `rej.\ rate' gives the empirical rejection rate when setting the level of statistical significance to 0.05.
  \end{tablenotes}
  \end{threeparttable}
  \end{adjustbox}
\end{table}

The top panel of Table \ref{tab:simulations_test1} shows the results for $\delta = 0$ and $\gamma = 0$, ensuring satisfaction of both full mediation (Assumption \ref{Afullmed}) and mediator exogeneity (Assumption \ref{Acimed}). The test maintains close adherence to the nominal 5\% significance level, with rejection rates of 4.6\% at $n = 1000$ and 5.0\% at $n = 4000$. The average estimate $\hat\theta$ is essentially zero at both sample sizes, and the standard deviation of estimates across simulations is well-approximated by the average standard error (e.g.\ 0.040 vs.\ 0.041 at $n = 4000$), indicating that the asymptotic variance formula performs adequately even at $n = 1000$.

The intermediate panel reports the results for $\delta = 0.5$ and $\gamma = 0$, corresponding to a violation of mediator exogeneity (Assumption \ref{Acimed}) of magnitude $20\%$ partial $R^2$, while full mediation holds. The rejection rate reaches 28.2\% at $n = 1000$ and 73.6\% at $n = 4000$, showing that the test has modest power against $M$-$Y$ confounding of this magnitude at the smaller sample size but high power once $n$ is sufficient. The negative value of $\hat\theta$ reflects collider bias induced by conditioning on $M$: when $D \perp U_{MY}$ unconditionally but both $D$ and $U_{MY}$ enter the equation for $M$, fixing $M$ creates a negative association between $D$ and $U_{MY}$, which in turn lowers the conditional mean of $Y$ given $D = 1$ relative to $D = 0$.

The bottom panel presents results for $\delta = 0$ and $\gamma = 0.125$, where mediator exogeneity holds but full mediation is violated through a direct treatment effect that accounts for $20\%$ of the total ATE. The rejection rate is 20.8\% at $n = 1000$ and 83.5\% at $n = 4000$, demonstrating that the test reliably detects this magnitude of full-mediation violation once the sample is large enough. Notably, $\hat\theta \approx 0.14$ at both sample sizes, close to the theoretical target $\gamma + \gamma^2 = 0.141$ that the score function tracks under a constant direct-effect shift. Overall, the simulations confirm satisfactory size control under the null and useful power at both confounding channels by $n = 4000$.

Our second simulation design explores scenarios in which the conditional randomization of $D$ (Assumption~\ref{A4}) is relaxed in favor of the weaker pair (\ref{A4a},~\ref{A4b}), as in Theorem~\ref{theorem2}. We introduce separate, mutually independent unobserved error terms for each potentially confounded pair, with a dedicated scalar knob for each channel:
\begin{eqnarray}\label{sim2}
    D &=& I\{X'\beta + \lambda U_{DM} + \theta U_{DY} + U_D > 0\},\notag  \\
    M &=& 0.5 D + X'\beta + \lambda U_{DM} + \delta U_{MY} + U_M,\notag \\
    Y &=& M + X'\beta + \gamma D + \delta U_{MY} + \theta U_{DY} + U_Y,\notag
\end{eqnarray}
with $X$ as before and $U_{DM}, U_{DY}, U_{MY}, U_D, U_M, U_Y$ mutually independent standard normal. By construction: $\theta$ controls $D$-$Y$ confounding (Assumption~\ref{A4a}) via the shared unobservable $U_{DY}$, $\lambda$ controls $D$-$M$ confounding (Assumption~\ref{A4b}) via $U_{DM}$, $\delta$ controls $M$-$Y$ confounding (Assumption~\ref{Acimed}) via $U_{MY}$, and $\gamma$ controls the direct effect (Assumption~\ref{Afullmed}). When $\lambda \neq 0$, $U_{DM}$ jointly affects $D$ and $M$, corresponding to the causal structure depicted in Figure~\ref{dag_2}, where identification of the indirect effect is no longer possible due to the violation of Assumption~\ref{A4b}. Theorem~\ref{theorem2} establishes that the testable implication \eqref{TI} remains satisfied as long as Assumptions~\ref{A4a},~\ref{Afullmed} (full mediation), and~\ref{Acimed} (mediator exogeneity) hold, even in the presence of $D$-$M$ confounding.

Table \ref{tab:simulations_thm2} presents simulation results for $\lambda = 0.5$, anchored to the same $20\%$ partial $R^2$ benchmark as $\delta$ and $\theta$. We hold $D$-$M$ confounding fixed at this level and vary each of the channels that the test \emph{is} designed to detect: $\theta$ ($D$-$Y$ confounding), $\delta$ ($M$-$Y$ confounding), and $\gamma$ (direct effect). This comparison illustrates what the test can and cannot detect. The presence of $D$-$M$ confounding via $\lambda$ shifts the propensity score distribution and modestly reduces the effective sample size after trimming.

\begin{table}[ht!]
  \centering
  \begin{adjustbox}{max width=\textwidth}
  \begin{threeparttable}
  \caption{Simulations: Illustrating Theorem \ref{theorem2} ($\lambda = 0.5$)}\label{tab:simulations_thm2}
  
  \begin{tabular}{c|c c c c}
  \midrule
  \midrule
  sample size & $\hat{\theta}$ & std.\ $\hat{\theta}$ & mean SE & rej.\ rate \tabularnewline
  \midrule
  \multicolumn{5}{c}{$\lambda = 0.5$, $\theta = 0$, $\delta = 0$ \& $\gamma = 0$ \quad (only $D$-$M$ confounding)}\tabularnewline
  \midrule
  1000 & $-$0.010 & 0.094 & 0.098 & 0.056 \tabularnewline
  4000 & $-$0.002 & 0.041 & 0.043 & 0.049 \tabularnewline
  \midrule
  \multicolumn{5}{c}{$\lambda = 0.5$, $\theta = 0$, $\delta = 0.5$ \& $\gamma = 0$ \quad ($+$ $M$-$Y$ confounding, $20\%$ partial $R^2$)}\tabularnewline
  \midrule
  1000 & $-$0.150 & 0.081 & 0.087 & 0.449 \tabularnewline
  4000 & $-$0.135 & 0.033 & 0.036 & 0.945 \tabularnewline
  \midrule
  \multicolumn{5}{c}{$\lambda = 0.5$, $\theta = 0.5$, $\delta = 0$ \& $\gamma = 0$ \quad ($+$ $D$-$Y$ confounding, $20\%$ partial $R^2$)}\tabularnewline
  \midrule
  1000 & 0.497 & 0.157 & 0.160 & 0.915 \tabularnewline
  4000 & 0.481 & 0.073 & 0.074 & 1.000 \tabularnewline
  \midrule
  \multicolumn{5}{c}{$\lambda = 0.5$, $\theta = 0$, $\delta = 0$ \& $\gamma = 0.125$ \quad ($+$ direct effect, $20\%$ of ATE)}\tabularnewline
  \midrule
  1000 & 0.140 & 0.108 & 0.115 & 0.186 \tabularnewline
  4000 & 0.141 & 0.050 & 0.052 & 0.799 \tabularnewline
  \midrule
  \midrule
  \end{tabular}
  \begin{tablenotes}
    \small
    \item Notes: Column `$\hat{\theta}$' gives the average estimate of the target parameter across simulations. `std.\ $\hat{\theta}$' gives the standard deviation of estimates across simulations. `mean SE' gives the average standard error. `rej.\ rate' gives the empirical rejection rate when setting the level of statistical significance to 0.05.
  \end{tablenotes}
  \end{threeparttable}
  \end{adjustbox}
\end{table}

The first panel of Table~\ref{tab:simulations_thm2} presents results for $\lambda = 0.5$, $\theta = 0$, $\delta = 0$, and $\gamma = 0$, corresponding to $D$-$M$ confounding alone, without violations of full mediation or mediator exogeneity. As predicted by Theorem \ref{theorem2}, the test maintains proper size control with rejection rates of 5.6\% at $n = 1000$ and 4.9\% at $n = 4000$, indistinguishable from the corresponding nulls in Table~\ref{tab:simulations_test1} (4.6\% and 5.0\%). This confirms that the test cannot detect, and is not designed to detect, violations of Assumption~\ref{A4b}.

The second panel reports results for $\lambda = 0.5$ and $\delta = 0.5$, where both $D$-$M$ confounding and $M$-$Y$ confounding are present (both at $20\%$ partial $R^2$). The rejection rate of 44.9\% at $n = 1000$ and 94.5\% at $n = 4000$ exceeds the rate against the same $\delta$ without $D$-$M$ confounding in Table~\ref{tab:simulations_test1} (28.2\% and 73.6\%). The additional $D$-$M$ confounding strengthens the collider association between $D$ and $U_{MY}$ induced by conditioning on $M$, which amplifies the signal the test picks up.

The third panel reports results for $\lambda = 0.5$ and $\theta = 0.5$, where $D$-$M$ confounding is present alongside $D$-$Y$ confounding (a violation of Assumption~\ref{A4a}). As Theorem~\ref{theorem2} predicts, the test detects this violation reliably, with rejection rates of 91.5\% at $n = 1000$ and 100\% at $n = 4000$. The point estimate $\hat\theta \approx 0.49 \approx \theta$ at both sample sizes, reflecting that $D$-$Y$ confounding maps directly into a difference in the conditional means $E[Y \mid D = 1, M, X] - E[Y \mid D = 0, M, X]$ that the score function targets.

The fourth panel reports results for $\lambda = 0.5$ and $\gamma = 0.125$, where $D$-$M$ confounding is present alongside a direct treatment effect ($20\%$ of the total ATE). The rejection rates of 18.6\% at $n = 1000$ and 79.9\% at $n = 4000$ are slightly below those in Table~\ref{tab:simulations_test1} (20.8\% and 83.5\%), indicating that the test maintains essentially the same power against full-mediation violations regardless of whether $D$-$M$ confounding is present. These results highlight an important implication: while our test reliably detects violations of $D$-$Y$ confounding, mediator exogeneity, and full mediation, it provides no information about $D$-$M$ confounding. Consequently, non-rejection of the test implies the identifiability of the mediator effect on the outcome, but does not guarantee identification of the total or indirect treatment effect, which additionally requires Assumption~\ref{A4b} to hold.

\section{Empirical Application}\label{application}

Perinatal depression affects millions of women worldwide, with lasting consequences for both mothers and children. \citet{baranov2020} experimentally evaluate a landmark intervention - the Thinking Healthy Programme - which delivered cognitive behavioral therapy (CBT) to depressed mothers in rural Pakistan during pregnancy and early infancy. According to \citet{baranov2020}, treated women exhibited lower depression rates and, notably, greater financial empowerment seven years later: they were more likely to control household finances and work outside the home. A natural question concerns the channels through which treatment of depression may affect economic empowerment. The authors hypothesize that improved mental health may strengthens women's social support networks within the household. They document that treated women are more likely to have a grandmother present in the home and report better relationships with their husbands. It remains unclear whether these channels fully account for the treatment's impact on financial empowerment, or whether other pathways - such as direct psychological changes including increased self-efficacy - also contribute. \citet{kwon2024testing} examine this issue using their sharp null test, which relies on a partial identification approach when testing full mediation alone and is unconditional in that it does not control for covariates.

In contrast, we jointly test full mediation and mediator exogeneity while conditioning on baseline covariates, including depression severity, wealth, and education. Specifically, we investigate whether, among women with similar baseline profiles, treatment affects financial empowerment solely through the proposed mediators (the presence of a grandmother and the quality of the relationship with the husband), and whether these mediators are exogenous such that the indirect causal mechanisms are identified in this experimental study.  We implement our test using lasso for nuisance estimation with 5-fold cross-fitting, aggregated over $S=10$ independent sample splits following \citet{Chetal2018}. Our controls include baseline demographics (mother's age and its square, education, employment status, empowerment), depression severity (Hamilton and BDQ scores), general functioning (GAF), perceived social support (MSPSS), household wealth, family structure (grandmother and mother-in-law presence, father's occupation, parity, first child indicator), and parents' education, all measured prior to treatment assignment.

Table \ref{tab:baranov} reports our findings alongside those of \citet{kwon2024testing} for comparison. Panel A presents our results conditional on baseline covariates; Panel B reproduces the unconditional test results from \citet{kwon2024testing}. We reject conditional mean independence for grandmother presence ($p = 0.011$), relationship quality ($p = 0.022$), and both mediators combined ($p = 0.013$), where standard errors are robust to clustering at the Union Council level (40 clusters), the level at which treatment was randomized. These findings suggest that, conditional on baseline characteristics, at least one of the two assumptions (full mediation or mediator exogeneity) is violated. Our results diverge notably from those of \citet{kwon2024testing}, who test for violations of full mediation using moment inequality conditions on outcome distributions across mediator (compliance) types---defined by how the mediator responds to treatment---under a monotonicity assumption of the mediator with respect to the treatment. They reject full mediation through each individual mediator ($p = 0.023$ and $p = 0.028$), but fail to reject when both mediators are combined ($p = 0.654$), concluding that the two channels together may fully explain the treatment effect. 

This discrepancy may be driven by differences in what the tests evaluate and how they do so. In the partial identification-based test of \citet{kwon2024testing}, rejection implies (asymptotically) a failure of full mediation, while mediator exogeneity is not tested. Conversely, non-rejection does not confirm that full mediation holds, as violations may be sufficiently small to still satisfy the moment inequality conditions. Our test evaluates the stronger condition of the joint satisfaction of full mediation and mediator exogeneity and does not rely on partial identification. Therefore, when the test of \citet{kwon2024testing} rejects -- as in the case of individual mediators ($p = 0.023$ and $p = 0.028$) -- our test should also reject, though not necessarily vice versa, and this is indeed what we find ($p = 0.011$ and $p = 0.022$). When their test does not reject -- for the combined mediators ($p = 0.654$) -- ours may still reject, for example due to a failure of mediator exogeneity, even if full mediation holds. Our rejection in this case ($p = 0.013$) therefore suggests that while grandmother presence and relationship quality may jointly fully mediate the treatment effect, the mediator exogeneity assumption required for causal identification of the indirect effect may be violated. Alternatively, full mediation itself may be violated, but not detected by the partial identification approach of \citet{kwon2024testing}.

In sum, our rejection indicates that the causal mechanisms are not fully identified: either some mediators are missing, the observed mediators (grandmother presence and relationship quality) are not exogenous, or both. Even if the treatment effect were fully mediated through the observed variables, violations of mediator exogeneity prevent point identification of the indirect effect. Unobserved factors, such as resilience or social skills, may jointly influence both the mediators and the outcome (financial empowerment), confounding attempts to isolate indirect effects.

\begin{table}[ht!]
\centering
\caption{Test of Conditional Mean Independence: Baranov et al.\ (2020) Data}
\label{tab:baranov}
\begin{tabular}{lcccc}
\toprule
Mediator & Test Statistic & SE (cl.) & $p$-value & $N$ \\
\midrule
\multicolumn{5}{l}{\textit{Panel A: Our Test (Conditional on $X$, $S=10$)}} \\[2pt]
\quad Grandmother presence & 0.513 & 0.202 & 0.011 & 516 \\
\quad Relationship quality & 0.420 & 0.184 & 0.022 & 503 \\
\quad Both combined & 0.435 & 0.175 & 0.013 & 503 \\[4pt]
\midrule
\multicolumn{5}{l}{\textit{Panel B: Kwon \& Roth (2024) - Partial Identification Test}} \\[2pt]
\quad Grandmother presence & - & - & 0.023 & 585 \\
\quad Relationship quality & - & - & 0.028 & 568 \\
\quad Both combined & - & - & 0.654 & - \\
\bottomrule
\end{tabular}
\vspace{6pt}
\begin{minipage}{0.9\textwidth}
\footnotesize
\textit{Notes:} Panel A reports results of our conditional mean independence test with null hypothesis $H_0: Y \perp D \mid M, X$. Baseline controls include mother's age (and square), depression severity (Hamilton and BDQ scores), general functioning (GAF), perceived social support (MSPSS), employment status, empowerment, wealth index, mother's and father's education, family structure (grandmother and mother-in-law presence, father's occupation), parity, and first child indicator, all measured prior to treatment. Nuisance functions estimated via lasso with 5-fold cross-fitting, aggregated over $S=10$ independent sample splits. Standard errors are robust to clustering at the Union Council level (40 clusters), the level at which treatment was randomized. Panel B reproduces the unconditional sharp null test from \citet{kwon2024testing}.
\end{minipage}
\end{table}

Our second application examines the role of social norms in shaping women's labor market outcomes. \citet{bursztyn2020} conduct a field experiment among married men in Saudi Arabia to study how misperceived beliefs about others' attitudes toward women working affect household decisions. Participants were asked to estimate what fraction of other men in their session would support women working outside the home. Those randomized to the treatment group then learned the true (typically higher) share, correcting their underestimates. The authors document two key findings. First, treated men were more likely to sign up their wives for a job-matching service offered at the end of the session. Second, in a follow-up survey, wives of treated men were more likely to have applied for jobs outside the home. These results highlight the need to understand mechanisms, as the information treatment may affect job applications not only through job-matching sign-up, but potentially also via other channels, such as shifting husbands' attitudes or reducing perceived social stigma.

Table~\ref{tab:bursztyn} reports the results. The partial identification -- based test of \citet{kwon2024testing} suggests that full mediation through job-matching sign-up is rejected ($p = 0.02$). The authors note that at least 11 percent of ``never-takers'' - men who would not sign up their wives for the service under either treatment -- are nevertheless directly affected by the information treatment. Applying our conditional independence test to these data, while conditioning on baseline beliefs about others' views, employment status, education, and demographic characteristics, we strongly reject conditional mean independence. While a rejection of our test can be driven by a failure of either full mediation or mediator exogeneity, these results are consistent with the findings of \citet{kwon2024testing}, who reject the sharp null of full mediation alone for this application ($p=0.02$). Our stronger rejection ($p = 0.004$) may additionally reflect violations of mediator exogeneity, as both the presence of additional mechanisms and the endogeneity of the mediator may contribute to the rejection of the joint null.

\begin{table}[ht!]
\centering
\caption{Test of Conditional Mean Independence: Bursztyn et al.\ (2020) Data}
\label{tab:bursztyn}
\begin{tabular}{lcccc}
\toprule
 & Test Statistic & SE & $p$-value & $N$ \\
\midrule
\multicolumn{5}{l}{\textit{Our Test (Conditional on $X$)}} \\[2pt]
\quad Job-matching sign-up & 0.114 & 0.040 & 0.004 & 375 \\[4pt]
\midrule
\multicolumn{5}{l}{\textit{Kwon \& Roth (2024) - Partial Identification Test}} \\[2pt]
\quad Job-matching sign-up & - & - & 0.020 & - \\
\bottomrule
\end{tabular}
\vspace{6pt}
\begin{minipage}{0.9\textwidth}
\footnotesize
\textit{Notes:} The outcome is an indicator for whether the wife applied for jobs outside the home. The mediator is job-matching service sign-up. Controls include baseline beliefs (own views and perceptions of others' views on women working, semi-segregation, and minimum wage), employment status of husband and wife, education, age, number of children, network characteristics, and session fixed effects-matching the specification in \citet{bursztyn2020}. Nuisance functions estimated via lasso with 5-fold cross-fitting, aggregated over $S=10$ sample splits.
\end{minipage}
\end{table}

\section{Conclusion}\label{conclusion}

We develop a unified framework for testing full mediation and the identifiability of indirect treatment effects in causal studies. Unlike naive outcome regressions on treatment and mediators, which can produce biased inference due to post-treatment confounding, our approach uses tests of conditional independence between the treatment and outcome given mediators and covariates. Under randomized or conditionally randomized treatments, this allows for joint testing of full mediation of the treatment effect via observed mediators and the exogeneity conditions necessary for identifying these mediated effects. When the treatment is not (conditionally) randomized, our framework can still test for full mediation, but not for the identifiability of mediated effects. In such settings, the conditional independence test informs only about mediator-outcome confounding, while treatment-mediator confounding may remain unaccounted for, preventing identification of mediated effects.

We connect our testable implication to two identification strategies in observational studies, the back-door (BD) and the front-door (FD) criteria. Satisfaction of our conditional independence condition implies that BD and FD yield identical (conditional) population parameters. This implies that, in general, equivalence of the BD and FD criteria does not guarantee identification of causal effects, since our conditional independence condition requires weaker assumptions than those necessary for effect identification - more specifically, it does not account for treatment-mediator confounding if the treatment is not (conditionally) randomized. Furthermore, while the conditional independence condition implies BD-FD equivalence, the converse only holds under an additional assumption such as a separability condition the rules out treatment-mediator interactions conditional on covariates. Such a condition appears unrealistic in many empirical applications. In this context, testing based on conditional independence appears more attractive than using a BD-FD comparison.

For empirical implementation, we propose a double machine learning (DML) approach using doubly robust (DR) score functions, which accommodates high-dimensional covariates while preserving root-n consistency and asymptotic normality. We also present a simulation study that indicates good finite-sample performance. Finally, we apply our method to two randomized experiments, \citet{baranov2020} on cognitive behavioral therapy for maternal depression and \citet{bursztyn2020} on social norms and women's employment. In both settings, we reject the joint null of full mediation and mediator exogeneity.

\newpage
\setlength\baselineskip{14.0pt}
\bibliographystyle{econometrica}
\bibliography{research.bib}

\newpage
{\large \renewcommand{\theequation}{A-\arabic{equation}} %
\setcounter{equation}{0} \appendix
}
\appendix \numberwithin{equation}{section}
{\small
\section{Appendix}

Proofs of Theorems \ref{theorem1} and \ref{theorem2} follow similar mechanics as in \citet{huberkueck2022} and \citet{huber2024testing}. We translate the identifying assumptions in this paper into DAG terminology \citep{Pearl95}.

Let $G$ denote the original graph, and let $G_D$, $G_M$, and $G_{DM}$ denote the interventional graphs in which all arrows originating from $D$, $M$, or $\{D, M\}$, respectively, are removed. This follows from the Single World Intervention Graphs (SWIGs) framework of \citet{richardson2013single}. We write $A \rightsquigarrow B$ for the existence of a directed path from $A$ to $B$.

The identifying assumptions translate as follows:
\begin{itemize}
\item[(\ref{A1})] No directed paths $Y \rightsquigarrow M$, $Y \rightsquigarrow D$, $Y \rightsquigarrow X$, $M \rightsquigarrow D$, $M \rightsquigarrow X$, or $D \rightsquigarrow X$ exist in $G$ (plus causal faithfulness, which places no further restriction on $G$).
\item[(\ref{A3})] $D$ and $M$ are d-connected with conditioning set $\{X\}$ in $G$.
\item[(\ref{A4})] $\{Y, M\}$ and $D$ are d-separated with conditioning set $\{X\}$ in $G_{DM}$.
\item[(\ref{A4a})] $Y$ and $D$ are d-separated with conditioning set $\{X\}$ in $G_{DM}$.
\item[(\ref{A4b})] $M$ and $D$ are d-separated with conditioning set $\{X\}$ in $G_{D}$.
\item[(\ref{Afullmed})] There is no open directed path $D \rightarrow Y$ with conditioning set $\{X\}$, apart from the one that goes through $M$, in $G_M$.
\item[(\ref{Acimed})] $Y$ and $M$ are d-separated with conditioning set $\{X\}$ in $G_{DM}$.
\end{itemize}
The testable implication translates as:
\begin{itemize}
\item[(\ref{TI})] $Y$ and $D$ are d-separated with conditioning set $\{M, X\}$ in $G$.
\end{itemize}

Below we provide analytical proofs. We also note that these proofs were verified computationally by exhaustive search in the space of DAGs with observed variables $Y,D,M,X$ and unobserved confounders for all their pairs, see  \citet{huber2024testing} for details.

\subsection{Proof of Theorem \ref{theorem1}}\label{appendixtheorem1}

\begin{proof}

``$\implies$''

We proceed by a proof by contradiction. Suppose that $\neg(\ref{TI})$ holds (throughout the proofs, we use the notation $\neg(S)$ to denote that the statement $S$ does not hold). This means that there exists an open path $Y - \cdots - D$ in the graph $G$ conditional on $\{M, X\}$.

Assumption (\ref{A4a}) (which from Assumption (\ref{A4})) states that there is no open path $Y - \cdots - D$ in graph $G_{DM}$ conditional on $\{X\}.$

$\neg$(\ref{TI}) and (\ref{A4a}) can only hold simultaneously, if either of the following situations occur:

\begin{itemize}
\item[(A)] $M$ is a collider and it opens the otherwise closed path $Y - \cdots - D$,
\item[(B)] adding arrows from $D$ and/or $M$ (difference between graphs $G$ and $G_{DM}$) opens the otherwise closed path $Y - \cdots - D$.
\end{itemize}

(A) Suppose that $M$ is a collider and it opens the path $Y - \cdots - D$:

$M$ is a collider, so we have $Y - \cdots \rightarrow M \leftarrow \cdots - D$. Now consider the this part  $Y - \cdots \rightarrow M$ of the path, which has to be open. We therefore have one of these cases:
\begin{itemize}
\item $Y \rightarrow M$ this contradicts (\ref{A1}). \Lightning 
\item $Y \leftarrow U \rightarrow M$ this contradicts (\ref{Acimed}). \Lightning
\item $Y \leftarrow X \rightarrow M$ this path is closed, because $X$ is conditioned on. \Lightning
\item $Y \leftarrow U \rightarrow X \rightarrow M$ this path is closed, because $X$ is conditioned on. \Lightning
\item $Y \leftarrow X \leftarrow U \rightarrow  M$ this path is closed, because $X$ is conditioned on. \Lightning
\item $Y \rightarrow X \rightarrow M$ this contradicts (\ref{A1}). \Lightning
\end{itemize}

Therefore $M$ is not a collider that opens up the closed path $Y - \cdots - D$.

(B) Adding arrow(s) from $D$ and/or $M$ opens the otherwise closed path $Y - \cdots - D$

Consider the following situations when adding arrow from $D$ would open up the path $Y - \cdots - D$.

\begin{itemize}
\item $D \rightarrow Y$ this contradicts (\ref{Afullmed}). \Lightning
\item $D \rightarrow X$ this contradicts (\ref{A1}). \Lightning
\item $D \rightarrow M \leftarrow \cdots Y$ this was ruled out in part (A) of the proof. \Lightning
\item $D \rightarrow M \rightarrow \cdots Y$ which is one of the following:
\begin{itemize}
\item $D \rightarrow M \rightarrow X \cdots Y$ this contradicts (\ref{A1}). \Lightning
\item $D \rightarrow M \rightarrow Y$ which does not lead to a violation of (\ref{TI}). \Lightning
\end{itemize}

\end{itemize}

Consider the following situations when adding arrow from $M$ would open up the path $Y - \cdots - D$.
\begin{itemize}
\item $D \dots M \rightarrow  \cdots Y$ which can only result in
\begin{itemize}
\item $D \dots M \rightarrow  Y$ which does not lead to a violation of (\ref{TI}). \Lightning
\item $D \dots M \rightarrow X \cdots Y$ this contradicts (\ref{A1}). \Lightning
\end{itemize}
\item $D \dots \leftarrow M \cdots Y$ which can only result in
\begin{itemize}
\item $D \leftarrow M \cdots Y$ this contradicts (\ref{A1}). \Lightning
\item $D \dots X \leftarrow M \cdots Y$ this contradicts (\ref{A1}). \Lightning
\end{itemize}
\end{itemize}

We conclude that neither (A) nor (B) can happen and therefore ``$\implies$'' holds.

\medskip
\medskip

Part ``$\impliedby$''

\medskip

``(\ref{Afullmed}) $\impliedby$''

Assume that (\ref{A1}), (\ref{A3}), (\ref{A4a}), (\ref{A4b}), (\ref{TI}), $\neg(\ref{Afullmed})$.

$\neg(\ref{Afullmed})$ together with (\ref{A1}) means that there is a direct link $D \rightarrow Y.$ But this directly contradicts (\ref{TI}). \Lightning

\medskip

``(\ref{Acimed}) $\impliedby$''

Assume that (\ref{A1}), (\ref{A3}), (\ref{A4a}), (\ref{A4b}), (\ref{TI}), $\neg(\ref{Acimed})$.

$\neg(\ref{Acimed})$ says that there is an open path between $Y$ and $M$ in graph $G_{DM}$ conditional on $X$. Because there are no arrows from $D$ or $M$ in $G_{DM},$ this is equivalent to the existence of path $M \leftarrow U \rightarrow Y.$ By (\ref{A3}) we have the existence of $D \rightarrow M$, therefore we have $D \rightarrow M \leftarrow U \rightarrow Y.$ But the existence of this path contradicts (\ref{TI}). \Lightning

\end{proof}

\subsection{Proof of Theorem \ref{theorem2}}\label{appendixtheorem2}

\begin{proof}
``$\implies$''

This part of the proof is identical to ``$\implies$'' part of Theorem \ref{theorem1}.

Part ``$\impliedby$''

``(\ref{Afullmed}) $\impliedby$" and ``(\ref{Acimed}) $\impliedby$'' are identical to the one of Theorem \ref{theorem1}.

``(\ref{A4a}) $\impliedby$''

Assume that (\ref{A1}), (\ref{A3}), (\ref{TI}) and $\neg$(\ref{A4a}).

$\neg$(\ref{A4a}) says that there exists an open path between $Y$ and $D$ in $G_{DM}$ conditional on $X$.

By (\ref{TI}) there is no open path between $Y$ and $D$ in $G$ conditional on $\{X,M\}$. This implies that such path does not exist even in $G_{DM}.$ This implies that conditioning on $M$ closes the path that is open in $\neg$(\ref{A4a}). This means that $M$ is not a collider on this path, so at least one of the arrows have to go from $M.$ But this contradicts the fact that there are no arrows from $M$ in $G_{DM}.$ \Lightning

\end{proof}

\subsection{Proof of Theorem \ref{theorem3}}\label{appendixtheorem3}

For simplicity, we only present the proof for discrete variables.

\begin{proof}
{\footnotesize
Let $d \in \mathcal{D}$, $m \in \mathcal{M}$, $y \in \mathcal{Y}$, and $x \in \mathcal{X}$ be arbitrary. We start from the conditional independence relation in \eqref{TI} and derive (\ref{FDBD}) equality.
\begin{eqnarray*}
\Pr(Y=y|D=d, M=m, X=x) &\overbrace{=}^{(\ref{TI})}& \Pr(Y=y|D=d',M=m, X=x) \\
&\implies& \\
\sum_{d'} \Pr(Y=y|D=d, M=m, X=x) \Pr(D=d'|X=x)&=&  \sum_{d'} \Pr(Y=y|D=d',M=m, X=x)\Pr(D=d'|X=x) \\ 
 \Pr(Y=y|D=d, M=m, X=x) \underbrace{\sum_{d'}\Pr(D=d'|X=x)}_{=1}&=&  \sum_{d'} \Pr(Y=y|D=d',M=m, X=x)\Pr(D=d'|X=x) \\ 
&\implies& \\
\Pr(Y=y|D=d, M=m, X=x)&=&  \sum_{d'} \Pr(Y=y|D=d',M=m, X=x)\Pr(D=d'|X=x) \\ 
&\implies& \\
\sum_m \Pr(M=m|D=d, X=x) \cdot &=& \sum_m \Pr(M=m|D=d, X=x) \cdot \\
\Pr(Y=y|D=d,M=m, X=x) &&   \sum_{d'} \Pr(Y=y|D=d', M=m, X=x)\Pr(D=d'|X=x) \\ 
&\implies& \\
\Pr(Y=y|D=d,X=x) &\overbrace{=}^{(\ref{FDBD})}& \sum_{m} \Pr(M=m|D=d,X=x) \cdot \\
&& \sum_{d'} \Pr(Y=y|D=d', M=m,X=x)  \Pr(D=d'|X=x)
\end{eqnarray*}

The first implication follows from multipying both sides by $\Pr(D=d'|X=x)$ and summing up across $d' \in \mathcal{D}.$ Similarly, in the third implication we multiply both sides with $\Pr(M=m|D=d, X=x)$ and sum across $m \in \mathcal{M}.$ The last implication is an application of the law of total probability.

}
\end{proof}

\subsection{Proof of Theorem \ref{theorem4}}\label{appendixtheorem4}

Similarly to the proof of Theorem \ref{theorem3}, we present the proof only for discrete variables.

\begin{proof}

Let $d \in \mathcal{D}$, $m \in \mathcal{M}$, $y \in \mathcal{Y}$, and $x \in \mathcal{X}$ be arbitrary. We start from the equality of population-level quantities in FD and BD \eqref{FDBD} and derive the testable conditional independence relation in \eqref{TI}.
{\footnotesize
\begin{eqnarray*}
\Pr(Y=y|D=d,X=x) &\overbrace{=}^{(\ref{FDBD})}& \sum_{m} \Pr(M=m|D=d,X=x) \cdot \\
&& \sum_{d'} \Pr(Y=y|D=d', M=m,X=x)  \Pr(D=d'|X=x) \\
&\implies& \\
\sum_m \Pr(M=m|D=d, X=x) \cdot &=& \sum_m \Pr(M=m|D=d, X=x) \cdot \\
\Pr(Y=y|D=d,M=m, X=x) &&   \sum_{d'} \Pr(Y=y|D=d', M=m, X=x)\Pr(D=d'|X=x) \\ 
&\overbrace{\implies}^{(\ref{Asepar})}& \\
\sum_m \Pr(M=m|D=d, X=x) \cdot &=& \sum_m \Pr(M=m|D=d, X=x) \cdot \\
\left(\alpha(y,d,x) + \beta(y,m,x)\right) &&   \sum_{d'} \left(\alpha(y,d',x) + \beta(y,m,x)\right)\Pr(D=d'|X=x) \\ 
&\implies& \\
\sum_m \Pr(M=m|D=d, X=x) \alpha(y,d,x) + &=& \sum_m \Pr(M=m|D=d, X=x)\sum_{d'}\alpha(y,d',x)\Pr(D=d'|X=x) + \\
\sum_m \Pr(M=m|D=d, X=x) \beta(y,m,x)   &&   \sum_m \Pr(M=m|D=d, X=x)\beta(y,m,x) \underbrace{\sum_{d'}  \Pr(D=d'|X=x)}_{=1} \\ 
&\implies& \\
\sum_m \Pr(M=m|D=d, X=x)\alpha(y,d,x)  &=& \sum_m \Pr(M=m|D=d, X=x)\sum_{d'}\alpha(y,d',x)\Pr(D=d'|X=x) \\
&\implies& \\
\alpha(y,d,x) \underbrace{\sum_m \Pr(M=m|D=d, X=x)}_{=1} &=& \sum_{d'}\alpha(y,d',x)\Pr(D=d'|X=x)\underbrace{\sum_m \Pr(M=m|D=d, X=x)}_{=1} \\
&\implies& \\
\alpha(y,d,x)  &=& \underbrace{\sum_{d'}\alpha(y,d',x)\Pr(D=d'|X=x)}_{\textrm{does not depend on $d$}}  \\
&\implies& \\
\alpha(y,d,x)  &=& \alpha(y,d',x)\\ 
&\implies& \\
\alpha(y,d,x) + \beta(y,m,x)  &=& \alpha(y,d',x) + \beta(y,m,x)\\ 
%&\implies& \\
%\sum_{d'} \Pr(Y=y|D=d, M=m, X=x) \Pr(D=d'|X=x)&=&  \sum_{d'} \Pr(Y=y|D=d',M=m, X=x)\Pr(D=d'|X=x) \\ 
&\implies& \\
\Pr(Y=y|D=d, M=m, X=x) &\overbrace{=}^{(\ref{TI})}& \Pr(Y=y|D=d',M=m, X=x) \\
\end{eqnarray*}
}
In the first implication, we apply the law of total probability to the left-hand side. The second implication is an application of the separability assumption \ref{Asepar}. The seventh implication follows from the fact that the RHS does not depend on $d$, and therefore neither does the LHS, which, after rearrangement, leads to the condition in \eqref{TI}.
\end{proof}

\end{document}